\def\BibTeX{{\rm B\kern-.05em{\sc i\kern-.025em b}\kern-.08em
    T\kern-.1667em\lower.7ex\hbox{E}\kern-.125emX}}
\newtheorem{theorem}{Theorem}[section] 
\newtheorem{proposition}[theorem]{Proposition}
\newtheorem{lemma}[theorem]{Lemma}
\theoremstyle{definition}
\newtheorem{definition}[theorem]{Definition}
\theoremstyle{remark}
\newtheorem{remark}[theorem]{Remark}
\newtheorem{example}[theorem]{Example}
\newcommand{\SMCeq}{\equiv}
\newcommand\PG[1][G]{\mathcal{#1}}
\newcommand{\OPGcat}{\mathsf{OPG}_\maxparity}
\newcommand{\oPGeq}{\simeq}
\newcommand{\IntC}{\mathsf{Int}}
\newcommand{\Play}{\mathsf{play}}
\newcommand{\Winning}{\mathcal{W}_\maxparity}
\newcommand{\possem}[1]{\{\!|#1|\!\}}
\newcommand{\oPGsem}[1]{\Winning\left(#1\right)}
\newcommand{\playsem}[1]{\{\!|#1|\!\}}
\newcommand{\sem}[1]{\left\llbracket{#1}\right\rrbracket}
\newcommand{\semexpr}[1]{\llparenthesis{\,#1\,}\rrparenthesis}
\newcommand\oPG[1]{\mathcal{#1}}
\newcommand{\ord}[1]{\left[#1\right]}
\newcommand\Labelling[1][\Omega]{#1}
\newcommand{\from}{\,:\,}
\newcommand{\Priorities}{\mathbb{P}}
\newcommand{\N}{\mathbb{N}}
\newcommand{\objr}{\blacktriangleright}
\newcommand{\objl}{\blacktriangleleft}
\newcommand{\idright}{
\begin{tikzpicture}
	\begin{pgfonlayer}{nodelayer}
		\node [style=none] (36) at (0, 0) {};
		\node [style=none] (39) at (-1.25, 0) {};
		\node [style=none] (44) at (1, 0) {};
	\end{pgfonlayer}
	\begin{pgfonlayer}{edgelayer}
		\draw [style=arrow] (39.center) to (36.center);
		\draw (36.center) to (44.center);
	\end{pgfonlayer}
\end{tikzpicture}
}
\newcommand{\idleft}{
\begin{tikzpicture}
	\begin{pgfonlayer}{nodelayer}
		\node [style=none] (36) at (-0.25, 0) {};
		\node [style=none] (39) at (1, 0) {};
		\node [style=none] (44) at (-1.25, 0) {};
	\end{pgfonlayer}
	\begin{pgfonlayer}{edgelayer}
		\draw [style=arrow] (39.center) to (36.center);
		\draw (36.center) to (44.center);
	\end{pgfonlayer}
\end{tikzpicture}
}
\newcommand{\priority}[2]{\underline{#1}\left(#2\right)}
\newcommand{\choosezero}[2]{{#1}\lor{#2}}
\newcommand{\chooseone}[2]{{#1}\land{#2}}
\newcommand{\fix}[2]{\mu{#1}.{#2}}
\newcommand{\maxparity}{M}
\newcommand{\PRel}{\mathbf{PRel}}
\newcommand{\Semantics}{\OPGcat}
\newcommand{\Syntax}{\mathsf{Syn}_\maxparity}
\newcommand{\object}[1]{\overline{#1}}
\newcommand{\lin}[1]{\overrightarrow{#1}}
\newcommand{\rout}[1]{\overrightarrow{#1}}
\newcommand{\lout}[1]{\overleftarrow{#1}}
\newcommand{\rin}[1]{\overleftarrow{#1}}
\newcommand{\entries}[1]{\mathsf{in}^{#1}}
\newcommand{\exits}[1]{\mathsf{out}^{#1}}
\newcommand{\labeleq}[1]{\mathrel{\overset{\makebox[0pt]{\mbox{\normalfont\tiny\sffamily #1}}}{=}}}
\newcommand{\eqPG}{\labeleq{PGA}}
\newcommand{\Tr}{\mathsf{Tr}}
\newcommand{\entry}[1]{\mathtt{l}(#1)}
\newcommand{\exit}[1]{\mathtt{r}(#1)}
\newcommand{\inl}{\mathtt{l}}
\newcommand{\inr}{\mathtt{r}}
\newcommand{\Powerset}[1]{\mathcal{P}\left(#1\right)}
\newcommand{\puis}{\,;\,}
\newcommand{\id}{\mathsf{id}}
\newcommand{\Rel}{\mathbf{Rel}}
\tikzset{x=1em, y=1em, baseline=-0.5ex}
\newcommand{\splitzero}{
\InputIfFileExists{player-0.tikz}{}{\input{./tikz/player-0.tikz}}
}
\newcommand{\splitone}{
\InputIfFileExists{player-1.tikz}{}{\input{./tikz/player-1.tikz}}
}
\newcommand{\join}{
\InputIfFileExists{merge.tikz}{}{\input{./tikz/merge.tikz}}
}
\newcommand{\start}{
\begin{tikzpicture}
	\begin{pgfonlayer}{nodelayer}
		\node [style=co-copy] (37) at (0.75, 0) {};
		\node [style=none] (43) at (1.5, 0) {};
		\node [style=none] (44) at (2, 0) {};
	\end{pgfonlayer}
	\begin{pgfonlayer}{edgelayer}
		\draw [style=arrow] (37) to (43.center);
		\draw (43.center) to (44.center);
	\end{pgfonlayer}
\end{tikzpicture}
}
\newcommand{\loseone}{
\begin{tikzpicture}
	\begin{pgfonlayer}{nodelayer}
		\node [style=player 1] (0) at (-1.75, 0) {};
		\node [style=none] (1) at (-2.5, 0) {};
		\node [style=none] (4) at (-3.25, 0) {};
	\end{pgfonlayer}
	\begin{pgfonlayer}{edgelayer}
		\draw [style=arrow] (4.center) to (1.center);
		\draw (1.center) to (0);
	\end{pgfonlayer}
\end{tikzpicture}
}
\newcommand{\losezero}{
\begin{tikzpicture}
	\begin{pgfonlayer}{nodelayer}
		\node [style=player 0] (0) at (-1.75, 0) {};
		\node [style=none] (1) at (-2.5, 0) {};
		\node [style=none] (4) at (-3.25, 0) {};
	\end{pgfonlayer}
	\begin{pgfonlayer}{edgelayer}
		\draw [style=arrow] (4.center) to (1.center);
		\draw (1.center) to (0);
	\end{pgfonlayer}
\end{tikzpicture}
}
\newcommand{\priorityedge}[1]{
\begin{tikzpicture}
	\begin{pgfonlayer}{nodelayer}
		\node [style=priority] (37) at (0.75, 0) {$#1$};
		\node [style=none] (43) at (1.75, 0) {};
		\node [style=none] (44) at (2, 0) {};
		\node [style=none] (45) at (-0.5, 0) {};
		\node [style=none] (46) at (0, 0) {};
	\end{pgfonlayer}
	\begin{pgfonlayer}{edgelayer}
		\draw [style=arrow] (37) to (43.center);
		\draw (43.center) to (44.center);
		\draw [style=arrow] (45.center) to (46.center);
		\draw (46.center) to (37);
	\end{pgfonlayer}
\end{tikzpicture}
}
\newcommand{\diagbox}[3]{
\begin{tikzpicture}
	\begin{pgfonlayer}{nodelayer}
		\node [style=basic box] (0) at (0, 0) {$#1$};
		\node [style=none] (1) at (1.5, 0) {};
		\node [style=none] (2) at (-1.5, 0) {};
		\node [style=none] (3) at (1.5, 0.5) {\scriptsize $#3$};
		\node [style=none] (4) at (-1.5, 0.5) {\scriptsize $#2$};
	\end{pgfonlayer}
	\begin{pgfonlayer}{edgelayer}
		\draw (2.center) to (0);
		\draw (0) to (1.center);
	\end{pgfonlayer}
\end{tikzpicture}
}
\newcommand{\lrdiag}[3]{
\begin{tikzpicture}
	\begin{pgfonlayer}{nodelayer}
		\node [style=none] (0) at (-2.25, 0) {};
		\node [style=none] (1) at (-1.25, 0) {};
		\node [style=none] (2) at (-0.5, 0) {};
		\node [style=none] (3) at (2, 0.5) {\scriptsize $#3$};
		\node [style=basic box] (4) at (0, 0) {$#1$};
		\node [style=none] (5) at (1.5, 0) {};
		\node [style=none] (6) at (0.5, 0) {};
		\node [style=none] (7) at (2.25, 0) {};
		\node [style=none] (8) at (-2, 0.5) {\scriptsize $#2$};
	\end{pgfonlayer}
	\begin{pgfonlayer}{edgelayer}
		\draw [style=arrow] (0.center) to (1.center);
		\draw (1.center) to (2.center);
		\draw [style=arrow] (6.center) to (5.center);
		\draw (5.center) to (7.center);
	\end{pgfonlayer}
\end{tikzpicture}
}
\tikzstyle{none}=[inner sep=0pt]
\tikzstyle{co-copy}=[circle, draw=black, fill=black, inner sep=0pt, minimum size=4pt]
\tikzstyle{player 0}=[diamond, draw=black, fill=white, inner sep=.5pt, minimum size=8pt, node font={\scriptsize}]
\tikzstyle{player 1}=[rectangle, draw=black, fill=white, inner sep=1.5pt, minimum size=6pt, node font={\scriptsize}]
\tikzstyle{priority}=[draw, fill=white, rounded rectangle, rounded rectangle left arc=none, minimum height=1.2em, minimum width=1.4em, node font={\scriptsize}]
\tikzstyle{basic box}=[draw, fill=white, rectangle, minimum height=1.6em, minimum width=1.4em]
\tikzstyle{tall box}=[draw, fill=white, rectangle, minimum height=2em, minimum width=1.4em]
\tikzstyle{small box}=[draw, fill=white, rectangle, minimum height=1.2em, minimum width=1.4em, node font={\scriptsize}]
\tikzstyle{extra tall box}=[draw, fill=white, rectangle, minimum height=2.7em, minimum width=1.4em]
\tikzstyle{arrow}=[->]
\tikzstyle{colour arrow} = [->, blue]
\begin{document}

\title{The Algebra of Parity Games
}

\author{\IEEEauthorblockN{Robin Piedeleu}\IEEEauthorblockA{\textit{University College London, UK}}}

\maketitle

\begin{abstract}
In recent work, Watanabe, Eberhart, Asada, and Hasuo have shown that parity games can be seen  as string diagrams, that is, as the morphisms of a symmetric monoidal category, an algebraic structure with two different operations of composition. Furthermore, they have shown that the winning regions associated to a given game can be computed functorially, \emph{i.e.} compositionally. Building on their results, this paper focuses on the equational properties of parity games, giving them a sound and complete axiomatisation. The payoff is that any parity game can be solved using equational reasoning directly at the level of the string diagram that represents it. Finally, we translate the diagrammatic language of parity games to an equally expressive symbolic calculus with fixpoints, and equip it with its own equational theory.
\end{abstract}

\section{Introduction}
\label{sec:intro}

Parity games are the computational engine that powers a number of important problems in logic, formal verification, and synthesis. For example, solving parity games is equivalent to the model-checking problem for the modal $\mu$-calculus~\cite{bradfield2018mu}. This makes parity games fundamental to verifying the correctness of complex systems. Similarly, problems like validity and satisfiability in several modal logics can be reduced to parity game solving~\cite{wilke2001alternating}. 

Famously, the complexity-theoretic status of parity games remains open: it is not known whether they can be solved in polynomial time, though we know the problem belongs to the intersection of the UP and co-UP classes~\cite{jurdzinski1998deciding}. This continues to drive extensive research into more efficient algorithms~\cite{jurdzinski2008deterministic,czerwinski2019universal}.

Recently, Watanabe, Eberhart, Asada, and Hasuo have introduced \emph{open} parity games, allowing for a compositional treatment of these games~\cite{compositionalPG}. In their work, the labelled graphs associated to open parity games are recast as string diagrams, that is, as the morphisms of a symmetric monoidal category, an algebraic structure with two different operations of composition, sequential and parallel. They generalise the notion of winning regions to open parity games and show that these define a monoidal functor into a suitable semantic category. In other words, they demonstrate that solving (open) parity games can be done compositionally: the solutions of some game can be computed by composing the semantics of its components. This pioneering work opens up the study of parity games using the tools of algebra, in particular equational reasoning, which is the focus of our paper.

\paragraph{Contributions} While in their work, Watanabe et al define a diagrammatic syntax for open parity games, their presentation involves an infinite number of generators, one for each possible node occurring in a parity game. For example, there is one generator for a parity $5$ node owned by Player $1$ with two inputs and three outputs, one for a similar node with three inputs and one outputs, and so on. The first contribution of this work is to identify a \emph{finite} set of generators from which all open parity games can be constructed, \emph{i.e.} a set which is \emph{universal} for open parity games (Theorem~\ref{thm:universality}). 

Our second and more noteworthy contribution is to identify all the algebraic laws that govern the relationship between open parity games: more specifically, we give a finite set of equalities which is \emph{sound and complete} for the chosen semantics (Theorems~\ref{thm:soundness} and~\ref{thm:completeness}). In other words, any two open parity games with the same semantics are provably equal using our equational theory. As a result, it is possible to solve a given parity game using only equational reasoning, going one significant step further than the work of Watanabe et al. In their own words~\cite[Remark 3.16]{compositionalPG}
\begin{quote}
``an equational characterization of the equivalence of open parity games [...] seems challenging, however, given the complexity of solving parity games, and we leave it as future work.''
\end{quote}
This is precisely the problem that this paper solves.

Our final contribution is a translation of the diagrammatic presentation of parity games into a standard algebraic syntax with fixpoints. The language and its translation follow the work of Hasegawa~\cite{hasegawa2012models}. We believe it will be helpful to those who are less familiar with string diagrams and demonstrates the practical value of our work in potential implementations.

The reader might legitimately wonder why we would want an equational axiomatisation when equivalence of parity games (or simply solving them) is decidable. We believe our equational characterisation has independent value, for several reasons.
First and foremost, equational reasoning is compositional: the solution of a complex game can be found from iteratively solving its component games. This is particularly valuable in cases where games have some regularity that allows us to reuse previously computed solutions~\cite{rathke2014compositional,watanabe2024pareto}. Relatedly, our algebraic way of reasoning about parity games grounds the solution to each game in \emph{local} reasoning principles (\emph{i.e.} axioms). This can be especially valuable in scenarios where these solutions must be communicated, checked, or automated. 
Finally, our diagrammatic axiomatisation links to a growing body of work on related graphical formalisms, in particular to previous work on automata or Boolean satisfiability, where similar algebraic structure appears~\cite{piedeleu2023finite,gu2023complete}.

\paragraph{Outline} 
We start with few preliminaries on parity games, before introducing \emph{open} parity games and their semantics in Section~\ref{sec:prelims}. 
Then, in Section~\ref{sec:syntax-semantics}, we introduce the diagrammatic syntax we will use throughout, give its interpretation, and show that it is sufficiently expressive to encode all open parity games (Theorem~\ref{thm:universality}). In Section~\ref{sec:axiomatisation}, we give the equational theory of open parity games and prove it is sound and complete for the intended semantics (Theorems~~\ref{thm:soundness} and~\ref{thm:completeness}). In Section~\ref{sec:symbolic}, we consider an equally expressive symbolic syntax. In conclusion, we examine related work and open with a few directions for future research.

\paragraph*{Notation} We will write $\ord{n}$ for the ordinal $\{1,\dots,n\}$, $[f,g]\from X+Y\to Z$ for the co-pairing of two maps $f\from X\to Z$ and $g\from Y\to Z$, and $\entry{x}, \exit{y}$ for the two inclusions of $x\in X$ and $y\in Y$ in the coproduct $X+Y$. We will also write $!_\emptyset$ for the only map out of the empty set. We denote by $R^\star$ the transitive closure of a binary relation $R$, and by $V^*$ the set of words over some set $V$.

\section{Background on (open) parity games}
\label{sec:prelims}

We fix a natural number $\maxparity \ge 2$ and let $\Priorities = \{0,1,\dots,\maxparity\}$ be our set of \emph{priorities}. All the results in this paper are parametric in this choice of upper bound, which can always be increased in practice to accommodate parity games with arbitrarily large priorities.

\subsection{Parity games}
\label{sec:parity-games}

\begin{definition}\label{def:parity-game}
A \emph{parity game} $\PG = (V,V_0,E,\Labelling)$ is a tuple consisting of a set of \emph{positions} $V$, partitioned into two subsets $V_0$ and $V_1:=V\setminus V_0$ indicating which position belong to Player $0$ or $1$ respectively, a set of (directed) \emph{edges} $E\subseteq V\times V$, and a labelling function $\Labelling\from V\to \Priorities$, assigning a \emph{priority} to each position of the game.
\end{definition}

A \emph{play} is a sequence of positions in $V$ such that $(v_n,v_{n+1})\in E$.  A finite play is \emph{winning} for Player $0$ (resp. $1$) if it ends on a position in $V_1$ (resp. $V_0$). Winning conditions for infinite plays are more interesting and follow what is known as the \emph{parity condition}: an infinite play $(v_n)_{n\in\N}$ is \emph{winning} for Player $0$ (resp. $1$), if the maximum priority that occurs infinitely often is even (resp. odd). In other words, Player $0$ wins $p$ if $\limsup_{n\in\N} \Labelling(v_n)$ is even, otherwise it is \emph{losing} (\emph{i.e.} winning for Player $1$). A strategy for Player $0$ is a function $\sigma\from V^*V_0\to V$ where $(v_k,\sigma(v_1\dots v_k))\in E$, that is, it is a function that chooses a new position adjacent to $v_k$ for each finite sequence of positions $v_1\dots v_k$ ending in a position in $V_0$ (that Player $0$ controls). Strategies for Player $1$ are defined in the same way, replacing $0$ with $1$ in the preceding sentence.

Solving a parity game involves determining, for a given position $v$, which player has a winning strategy for any opponent strategy, starting from $v$. The key result about parity games is that they are \emph{positionally determined}: each position is winning for exactly one of the two players~\cite{emerson1991tree}.

Typically, parity games are represented as a directed graph whose diamond nodes represent positions owned by Player $0$, square nodes represent positions owned by Player $1$, with labels indicating the  corresponding priorities.
\begin{example}
\label{ex:pg}
The game $(\{v_{tl},v_{tr},v_{bl},v_{br}\},\{v_{tl},v_{bl}\}, E,\Labelling)$, where $E=\{(v_{tl},v_{tr}),(v_{tr},v_{tl}),(v_{tl},v_{bl}),(v_{bl},v_{br}),(v_{br},v_{bl})$, $(v_{br},v_{tl}),(v_{br},v_{tr})\}$
and $\Labelling : v_{tl}\mapsto 4,v_{tr}\mapsto 1,v_{bl}\mapsto 3, v_{br}\mapsto 2$, can be depicted as
$$

\InputIfFileExists{example-game-1.tikz}{}{\input{./tikz/example-game-1.tikz}}

$$
In this game, the top nodes $\{v_{tl},v_{tr}\}$ are winning positions for Player $0$ and the bottom ones are winning positions for Player $1$. Indeed, consider the bottom right node $v_{br}$ for example: Player $1$ can choose to go left to $v_{bl}$, at which point Player $0$ has no choice but to come back to the starting position; this defines an infinite play whose highest priority is odd (\emph{viz.} $3$), so Player $1$ wins. If Player $1$ start by going up diagonally to $v_{tl}$ node, Player $0$ would win by responding with $v_{tr}$ node. 
\end{example}

\subsection{Open parity games and their algebraic operations}
\label{sec:category-opg}

We now extend the usual definition of parity games to include incoming and outgoing edges, in order to be able to compose them~\cite[Section 2]{compositionalPG}.
\begin{definition}
\label{def:open-parity-game}
An \emph{open parity game} is a tuple $\oPG{A} = (\object{m},\object{n},V,V_0,E,\Labelling)$ where
\begin{itemize}
\item  $\object{m} =(\lin{m},\lout{m})$ and $\object{n} = (\rout{n},\rin{n})$ are pairs of natural numbers, and we call $\entries{\oPG{A}} := \ord{\lin{m}+\rin{n}}$ the set of \emph{entry positions} and $\exits{\oPG{A}} = \ord{\lout{m}+\rout{n}}$ the set of \emph{exit positions};
\item $V$ is a finite set of \emph{internal positions};
\item $V_0$ indicates the set of internal positions which belong to Player $0$ (and $V_1:= V\setminus V_0$ those that belong to Player $1$);
\item $E$ is a relation $E\subseteq (\entries{\oPG{A}}+V)\times (V+\exits{\oPG{A}})$ whose elements are called \emph{edges};
\item there is a unique edge out of each entry position and a unique edge into each exit position;
\item $\Labelling$ is a function $V\to \Priorities$ assigning a priority to each internal position.
\end{itemize}
Finally, we write $\oPG{A} \from \object{m}\to \object{n}$, and call $\object{m}$ its \emph{domain} and $\object{n}$ its \emph{codomain}.
\end{definition}
We represent open parity games much like plain parity games, but with entry positions as incoming edges without any source and exit positions as outgoing edges without any target (with the total order on each side read from top to bottom). 
\begin{example}\label{ex:opg}
The following 
$$

\InputIfFileExists{example-opg-1.tikz}{}{\input{./tikz/example-opg-1.tikz}}

$$
represents the open parity game $((2,0),(2,0)$, $\{v_{tl},v_{tr},v_{bl},v_{br}\}$, $\{v_{tl},v_{bl}\}$, $E,\Labelling)$ where $E$ and $\Labelling$ are defined as in Example~\ref{ex:pg}.
\end{example}
\begin{definition}
\label{def:composition}
Let $\oPG{A}\from\object{m} \to \object{n}$ and $\oPG{B}\from \object{n}\to \object{o}$ be two open parity games. Their \emph{composition} $\oPG{A}\puis \oPG{B}\from \object{m}\to \object{o}$ is the open parity game $(\object{m},\object{o},V^\oPG{A}+V^\oPG{B},V_0^\oPG{A}+V_0^\oPG{B},E^{\oPG{A}\puis\oPG{B}},[\Labelling^\oPG{A},\Labelling^\oPG{B}])$, such that $E^{\oPG{A}\puis\oPG{B}} :=$
$$
E^\oPG{A} \setminus \big((\entries{\oPG{A}}+V^\oPG{A})\times \exits{\oPG{A}}\big) + E^\oPG{B} \setminus \big(\entries{\oPG{B}}\times (V^\oPG{B}+\exits{\oPG{B}})\big) + E'
$$ 
where 
\begin{align*}
E' := \big\{(v,w)\mid \exists i\,\big((v,\inr(i))\in E^\oPG{A}\land (\inl(i),w)\in E^\oPG{B}\big)\big\}\cup\quad\\
\; \big\{(w,v)\mid \exists i\,\big((w,\inr(\rout{o}+i))\in E^\oPG{B}\land (\inl(\lin{m}+i),v)\in E^\oPG{A}\big)\big\}
\end{align*}
Note that we abuse notation slightly: we write $E^\oPG{A} \setminus \big((\entries{\oPG{A}}+V^\oPG{A})\times \exits{\oPG{A}}\big)$ to mean $E^\oPG{A} \setminus \big((\entries{\oPG{A}}+V^\oPG{A})\times \inr^*(\exits{\oPG{A}})\big)$, where $\inr^*(\exits{\oPG{A}})$ is the image of the right insertion $\inr$ in the coproduct $V^\oPG{A}+\exits{\oPG{A}}$.
\end{definition}
Intuitively, the composition of two open parity games identifies the right nodes (entries and exits) of the first morphism with the left nodes (entries and exits) of the second and hides them, \emph{i.e.}, replaces them each with an edge in the appropriate direction. 
\begin{example}\label{ex:composition}
The open parity game from Example~\ref{ex:opg} is the composite of the following two games:
$$

\InputIfFileExists{example-opg-composition.tikz}{}{\input{./tikz/example-opg-composition.tikz}}

$$
\end{example}
The example below will be the unit for this operation of composition.
\begin{example}\label{ex:id-game}
We call the game $(\object{m},\object{m},\emptyset,\emptyset,\{(i,i)\mid i\in \ord{\lin{m}+\rout{n}} \},!_\emptyset)$ the \emph{identity} over $\object{m}$ and write it as $\id_{\object{m}}\from \object{m} \to \object{m}$.
\end{example}
\begin{definition}
\label{def:monoidal-product}
Let $\oPG{A}_1\from \object{m}_1 \to \object{n}_1$ and $\oPG{A}_2\from \object{m}_2\to \object{n}_2$ be two open parity games. Their \emph{monoidal product} $\oPG{A}_1\otimes \oPG{A}_2\from \object{m}_1+\object{m}_2\to \object{n}_1+\object{n}_2$ is the open parity game $((\lin{m}_1+\lin{m}_2, \lout{m}_1+\lout{m}_2), (\rout{n}_1+\rout{n}_2,\rin{n}_1+\rin{n}_2), V^{\oPG{A}_1}+V^{\oPG{A}_2},V_0^{\oPG{A}_1}+V_0^{\oPG{A}_2},E^{\oPG{A}_1\otimes\oPG{A}_2},[\Labelling^{\oPG{A}_1},\Labelling^{\oPG{A}_2}])$, with
$$
E^{\oPG{A}_1\otimes\oPG{A}_2} := E^{\oPG{A}_1} + E^{\oPG{A}_2}_\downarrow
$$
where $E^{\oPG{A}_2}_\downarrow$ shifts entry and exit positions of $E^{\oPG{A}_2}$ by $\lin{m}_1+ \rin{n}_1$ and $\lout{m}_1+ \rout{n}_1$ respectively: more precisely, it is the relation defined by $(v,v')\in E^{\oPG{A}_2}_\downarrow$ iff
\begin{itemize}
\item $v,v'\in V^{\oPG{A}_2}$ and $(v,v')\in E^{\oPG{A}_2}$;
\item $v = \lin{m}_1+ \rin{n}_1 + i$ for $i\in \entries{\oPG{B}}$ and $(i,v')\in E^{\oPG{A}_2}$;
\item $v' = \lout{m}_1+ \rout{n}_1 + o$ for $o\in \exits{\oPG{B}}$ and $(v,o)\in E^{\oPG{A}_2}$.
\end{itemize}
\end{definition}
Intuitively, the monoidal product of two games simply juxtaposes the two graphs and renumbers the entry and exit positions of the second game. The first example below will be the unit for this operation; the second one is another simple open parity game that reorder the entry and exit positions. 
\begin{example}
\label{def:empty-swap}
(\emph{i}) The \emph{empty game} $(0,0)\to (0,0)$ is the open parity game given by $((0,0),(0,0),\emptyset,\emptyset,\emptyset,!_\emptyset)$.

(\emph{ii}) The \emph{swap} game $\sigma^{\object{m}}_{\object{n}} = (\object{m}+\object{n},\object{n}+\object{m},\emptyset,\emptyset,E^\sigma,!_\emptyset)$ where the sum of pairs represents the pair of component-wise sums and $E^\sigma := \big\{(k,\rout{n}+k) \mid k\in\ord{\lin{m}} \big\}\cup \big\{(\lin{m}+k, k) \mid k\in\ord{\rout{n}} \big\} \cup \big\{(k,\lout{m}+k) \mid k\in\ord{\rin{n}} \big\}\cup \big\{(\rin{n}+k, k) \mid k\in\ord{\lout{m}} \big\}$. For instance,
$$
\sigma^{(2,0)}_{(1,0)} = 
\InputIfFileExists{swap-2-1.tikz}{}{\input{./tikz/swap-2-1.tikz}}

$$
\end{example}

 

Up to a suitable notion of syntactic equivalence, it is possible to show that open parity games form a 
symmetric monoidal category~\cite[Theorem 2.10]{compositionalPG}. However, since we will only care about a coarser notion of semantic equivalence, we postpone the definition of the category of open parity game to the next subsection. 

\subsection{Equivalence of open parity games}
\label{sec:semantic-opg}

The main question associated with a given parity game is that of determining the winning positions for each player. We now extend this notion in a compositional way to open parity games, following the work of Watanabe et al. If parity games are positionally determined, the entry and exit positions of open parity games require us to keep track of more information. Plays in an open parity game are defined analogously to plays in their closed counterpart: they are (finite or infinite) sequences of positions that follow edges of the game. Note that, for open parity games, plays that include some exit position $o$ are necessarily finite and end with $o$ (this is because we have defined exit positions as sink nodes in the graph corresponding to the game). Conversely, a play (finite or infinite) that contains some entry position $i$ necessarily starts with $i$ (because entry positions are defined to be source nodes in the corresponding graph). 

Winning conditions for infinite plays in open parity games are the same as for standard parity games: an infinite play is \emph{winning} for Player $0$ (resp. $1$), if it satisfies the parity condition, \emph{i.e.}, if the maximum priority that occurs infinitely often is even (resp. odd). Similarly, finite plays that end at an internal position of an open parity game are winning for the player who does not own the last position. The situation is less clear for finite play that ends at an exit position. How to interpret the outcome of such a play? To answer this question, Watanabe et al move from a purely qualitative (winning or losing) definition to a quantitative one: the denotation of a finite play that ends at an exit position $o$ is the pair $(o,k)$where $k$ is the highest priority encountered along the play.
\begin{definition}
\label{def:play-denotation}
Given a play $(v_i)_{i\in I}$ on an open parity game, its \emph{denotation} $\playsem{(v_i)_{i\in I}}_\oPG{A}$ is given by
		$(o,k)$ if $I$ is finite, ends on exit position $o$ and $k=\max\{\Labelling(v_i) \mid i\in I\}$;
		by $\top$ if $I$ is infinite and $(v_i)_{i\in I}$ is winning for Player $0$; 
		by $\bot$ if $I$ is infinite and $(v_i)_{i\in I}$ is winning for Player $1$.
\end{definition} 
\begin{definition}
\label{def:strategy}
In an open parity game $\oPG{A}\from \object{m} \to \object{n}$, a strategy for Player $0$, or $0$-\emph{strategy} is a function $\sigma\from (\entries{\oPG{A}}+V)^*V_0\to V+ \exits{\oPG{A}}$ where $(v_k,\sigma(v_1\dots v_k))\in E$. In other words, it is a function that chooses a new position adjacent to $v_k$ for each finite sequence of positions $v_1\dots v_k$ ending in a position that Player $0$ controls. Strategies for Player $1$ are defined in an analogous way. 

For $0$-strategy $\sigma_0$ and $1$-strategy $\sigma_1$, \emph{the play induced by} $\sigma_0$ and $\sigma_1$, starting at position $v$, is the sequence $\Play_{\sigma_0,\sigma_1}^v$ of positions given by $v_1 = v$ and $v_{n+1} = \sigma_0(v_1\dots v_n)$ if $v_n\in V_0$ and $v_{n+1} = \sigma_1(v_1\dots v_n)$ otherwise. 
\end{definition}
\begin{definition}
\label{def:denotation-semantics}
The \emph{denotation} $\possem{i,\sigma_0}_\oPG{A}$ of an entry position $i\in\entries{\oPG{A}}$ and $0$-strategy $\sigma_0$ in an open parity game $\oPG{A}\from \object{m} \to \object{n}$ is $\textbf{lose}$ if there is a $1$-strategy $\sigma_1$ such that $\playsem{\Play_{\sigma_0,\sigma_1}^v}_\oPG{A} =\bot$; otherwise $\possem{i,\sigma_0}_\oPG{A}$ is the set 
$$
\bigg\{\playsem{\Play_{\sigma_0,\sigma_1}^v}_\oPG{A} \,\bigg|\, \sigma_1 \text{ is a $1$-strategy}, \playsem{\Play_{\sigma_0,\sigma_1}^v}_\oPG{A} \neq \top\bigg\}
$$

The \emph{semantics} of $\oPG{A}\from \object{m} \to \object{n}$, written $\oPGsem{\oPG{A}}$, is the collection:
\begin{equation*}
\begin{gathered}
\big\{(T,i)\in \Powerset{\exits{\oPG{A}}\times\Priorities}\times \entries{\oPG{A}} \qquad\qquad\qquad\qquad\qquad\qquad\qquad\\
\; \,\big|\, \possem{i,\sigma_0}_\oPG{A} \subseteq T, \possem{i,\sigma_0}_\oPG{A} \neq \textbf{lose} \text{ for some $0$-strategy } \sigma_0\big\}
\end{gathered}
\end{equation*}
\end{definition}
Intuitively, $(T,i)$ belongs to $\oPGsem{\oPG{A}}$, if Player $0$ has a strategy such that, for any strategy of Player $1$, the induced play starting at $i$ will end in \emph{some} exit positions $o$ of $T$, with maximum priority encountered $k$ if $(o,k)\in T$. So, in a sense, the smaller $T$ is, the more control Player $0$ has over the game, and the less influence Player $1$'s strategy has. In particular, if $(\emptyset,i)\in \oPGsem{\oPG{A}}$ then Player $0$ has a winning strategy starting from $i$. Conversely, if $\oPGsem{\oPG{A}}$ contains no pair $(T,i)$, then Player $1$ has a winning strategy starting from $i$. In all intermediate cases, the outcome of the game is not determined in the usual sense, as it will depends on what happens when the play exits $\oPG{A}$. Crucially, for any open parity game if $\oPG{A}$ has not exit positions, that is, if its codomain is $(0,0)$, then $\oPGsem{\oPG{A}}$ contains precisely the information about which player wins from which entry position. In this sense, the semantics of open parity games extends the usual notion of winning region of parity games in a faithful way.
\begin{definition}\label{def:opg-equivalence}
We define the following equivalence relation on open parity games: $\oPG{A}\oPGeq \oPG{B}$ if the two games $\oPG{A},\oPG{B}\from m \to n$ have the same semantics; we say that $\oPG{A}$ and $\oPG{B}$ are \emph{equivalent}.
\end{definition}
\begin{remark}
\label{rmk:upward-closed}
It is possible to organise the relations that provide the semantics of open parity games into a traced monoidal category~\cite[Section 5]{compositionalPG} whose morphisms are certain \emph{monotone} relations, defined as the coKleisli category of a composite comonad based on the work of Grellois and Melli{\`e}s~\cite{grellois2015finitary}. This makes $\Winning$ into a traced symmetric monoidal functor, called the \emph{winning position functor}.


However, we have found computing the composition of these relations to be complicated in practice and often less intuitive than reasoning with strategies directly, at least for the purposes of this paper. For this reason, we only use the semantics to define equivalence of open parity games and use the results of Watanabe et al to show that they form a trace symmetric monoidal category \emph{up to equivalence}, thereby warranting our use of string diagrams in the following section.

There is another difference between our work and that of Watanabe et al: in this paper, we do not use the \emph{Int-construction}, a well-known technique to turn a traced monoidal category into a compact-closed one~\cite{joyal1996traced}. Given our aims, we have found it unnecessary here: from the perspective of open parity games as graphs, we do not need to define a trace operation explicitly, since the graphs we consider admit incoming and outgoing edges both in their domain and codomain, and composition is therefore sufficient to form loops. From the semantic perspective, we can rely on the results of Watanabe et al to show that open games up to equivalence already form a compact-closed category and therefore we have no need for the explicit use of the Int-construction (though we use it implicitly by our appeal to results in that paper).
\end{remark}
\begin{theorem}
\label{thm:opg-traced-monoidal-category}
Open parity games up to $\oPGeq$, with the operations of composition (with identities as unit), monoidal product (with the empty game as unit), swap 
defined above, form a 
symmetric monoidal category, which we call $\OPGcat$.
\end{theorem}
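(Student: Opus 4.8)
The plan is to obtain the symmetric monoidal structure by transporting it from the results of Watanabe et al rather than re-deriving it from scratch. Recall that open parity games, taken up to a finer \emph{syntactic} equivalence $\SMCeq$, already form a symmetric monoidal category under exactly the operations of Definitions~\ref{def:composition}, \ref{def:monoidal-product} and~\ref{def:empty-swap}, with the identities of Example~\ref{ex:id-game} as units for $\puis$ and the empty game as unit for $\otimes$ (this is \cite[Theorem 2.10]{compositionalPG}). The relation $\oPGeq$ of Definition~\ref{def:opg-equivalence} is nothing but the kernel of the winning-position functor $\Winning$ of Remark~\ref{rmk:upward-closed}: $\oPG{A}\oPGeq\oPG{B}$ iff $\oPGsem{\oPG{A}} = \oPGsem{\oPG{B}}$. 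Since $\Winning$ is well-defined on $\SMCeq$-classes, syntactic equivalence refines semantic equivalence, so $\SMCeq\,\subseteq\,\oPGeq$. Thus $\OPGcat$ is the quotient of the syntactic symmetric monoidal category by the coarser relation $\oPGeq$, and it suffices to check that this quotient is again symmetric monoidal.

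The first step is to verify that $\oPGeq$ is a \emph{congruence}, i.e. that $\puis$ and $\otimes$ descend to $\oPGeq$-classes: if $\oPG{A}\oPGeq\oPG{A}'$ and $\oPG{B}\oPGeq\oPG{B}'$ (with matching domains and codomains) then $\oPG{A}\puis\oPG{B}\oPGeq\oPG{A}'\puis\oPG{B}'$ and $\oPG{A}\otimes\oPG{B}\oPGeq\oPG{A}'\otimes\oPG{B}'$. This is exactly where I would invoke that $\Winning$ is a \emph{monoidal functor}: it preserves composition and monoidal product, so that $\oPGsem{\oPG{A}\puis\oPG{B}} = \oPGsem{\oPG{A}}\puis\oPGsem{\oPG{B}}$ and $\oPGsem{\oPG{A}\otimes\oPG{B}} = \oPGsem{\oPG{A}}\otimes\oPGsem{\oPG{B}}$. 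Equality of the semantics of the components therefore forces equality of the semantics of the composite and of the product, which is precisely the congruence property. The same preservation statements guarantee that identities, the empty game, and the swaps are sent to the structural morphisms of the semantic category, so they continue to serve as units, unit object, and symmetry after quotienting.

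With well-definedness in hand, the categorical axioms come for free. Associativity and unitality of $\puis$, the interchange law and the associativity and unitality of $\otimes$, naturality of the swaps, and the hexagon and symmetry equations all already hold up to the \emph{finer} relation $\SMCeq$ by \cite[Theorem 2.10]{compositionalPG}; since each such equation is an instance of $\SMCeq$ and $\SMCeq\subseteq\oPGeq$, it is a fortiori an instance of $\oPGeq$, hence an equality of morphisms in the quotient. Equivalently, one may observe that $\OPGcat$ is isomorphic to the image of $\Winning$ inside the semantic category of Remark~\ref{rmk:upward-closed}, which is a symmetric monoidal subcategory and therefore symmetric monoidal in its own right.

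The genuinely substantive content — that the strategy-based semantics of Definition~\ref{def:denotation-semantics} is compositional, so that $\Winning$ is a monoidal functor at all — is precisely what I would import from \cite{compositionalPG}; re-proving it directly from the definition of $\oPGsem{-}$ in terms of strategies is the only real obstacle, and, as Remark~\ref{rmk:upward-closed} notes, it is the step best left to the cited work. Everything else is the routine bookkeeping of checking that a monoidal congruence yields a monoidal quotient.
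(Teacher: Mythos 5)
Your proposal is correct and follows essentially the same strategy as the paper: both arguments first identify the operations of Definitions~\ref{def:composition} and~\ref{def:monoidal-product} with those of Watanabe et al, and then rest on the symmetric monoidal functoriality of $\Winning$ to show that $\oPGeq$ is a congruence for $\puis$ and $\otimes$. The one point where the routes diverge is in discharging the symmetric monoidal axioms themselves: you pull them back from the \emph{syntactic} symmetric monoidal category of open parity games via the inclusion $\SMCeq\,\subseteq\,\oPGeq$, whereas the paper pushes everything forward through $\Winning$ and reads the axioms off from associativity, unitality, and the symmetry laws of the \emph{semantic} category $\IntC\left(\textbf{FinScottL}_{!_\maxparity}^{op}\right)$. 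Both are valid; yours has the mild advantage of needing nothing about the semantic category beyond the existence of the functor, while the paper's version avoids having to argue separately that the syntactic equivalence refines the semantic one.
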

\begin{proof}
The first remark is that the morphisms of $\OPGcat$ with the operations of composition and monoidal product defined earlier are the same as those of the symmetric monoidal category of the same name in the work of Watanabe et al~\cite[Definition 2.11]{compositionalPG}. As we have mentioned, the main difference is that they define it using the Int-construction, but it suffices to unpack their definition to realise that our operations on open parity games and theirs are the same (because we build them straight-away using bidirectional composition, while they proceed in two steps, starting from a traced monoidal category of games whose entry positions are all on the left and whose exit position are all on the right, and apply the Int-construction to it).

Secondly, we will need the fact that the semantic mapping $\Winning$ is a symmetric monoidal functor~\cite[Theorem 5.8]{compositionalPG} into a certain symmetric monoidal category of monotone relations~\cite[Section 4]{compositionalPG}, which the authors call $\IntC\left(\textbf{FinScottL}_{!_\maxparity}^{op}\right)$, though we do not need to define it here.
Now, consider four open parity games such that $\oPG{A}\oPGeq\oPG{A}'$ and $\oPG{B}\oPGeq\oPG{B}'$, that is, such that $\Winning(\oPG{A})=\Winning(\oPG{A}')$ and $\Winning(\oPG{B})=\Winning(\oPG{B}')$. The functoriality of $\Winning$ implies that $\Winning(\oPG{A}\puis\oPG{B}) = \Winning(\oPG{A})\bm{;}\Winning(\oPG{B}) = \Winning(\oPG{A}')\puis\Winning(\oPG{B}') =  \Winning(\oPG{A}'\puis\oPG{B}')$, where the middle operation $\bm{;}$ 	is the composition in $\IntC\left(\textbf{FinScottL}_{!_\maxparity}^{op}\right)$. From this, we can conclude that $\oPG{A}\puis\oPG{B}\oPGeq \oPG{A}'\puis\oPG{B}'$ or, in other words, that $\oPGeq$ is a congruence for the operation of composition. Moreover, if $\Winning\big((\oPG{A}\puis \oPG{B})\puis \oPG{C}\big) = \big(\Winning(\oPG{A})\bm{;}\Winning(\oPG{B})\big)\bm{;}\Winning(\oPG{C}) = \Winning(\oPG{A})\bm{;}\big(\Winning(\oPG{B})\bm{;}\Winning(\oPG{C})\big) = \Winning\big(\oPG{A}\puis (\oPG{B}\puis \oPG{C})\big)$, by associativity of composition in $\IntC\left(\textbf{FinScottL}_{!_\maxparity}^{op}\right)$, so that $(\oPG{A}\puis \oPG{B})\puis \oPG{C}\oPGeq \oPG{A}\puis (\oPG{B}\puis \oPG{C})$, showing that the composition of open parity games up to equivalence is associative.

A similar reasoning, using the fact that $\Winning$ is 
symmetric monoidal shows that $\oPGeq$ is a congruence for the monoidal product, 
and that the laws of symmetric monoidal categories hold up to equivalence.
\end{proof}

\section{A diagrammatic calculus for parity games}
\label{sec:syntax-semantics}

\subsection{Diagrammatic syntax}
\label{sec:diagrammatic-syntax}

Our syntax will be defined as a \emph{prop}, a strict symmetric monoidal category whose objects are all products of some finite set of objects. More specifically, it will be a prop which is freely generated by a given signature, a set of generating objects and morphisms. We will represent the morphisms of this category as \emph{string diagrams}, the natural two-dimensional syntax of symmetric monoidal categories. For a gentle introduction to string diagrams we refer the reader to Selinger's classic survey~\cite{selinger2011survey} or Piedeleu and Zanasi's recent text~\cite{piedeleu2023introduction}. The reader who would like a formal account of the construction of free props from a given signature will find it in the work of Baez, Coya and Rebro on network theory~\cite[Appendix B]{baez2018props}. 
\begin{definition}
Let $\Syntax$ be the two-coloured prop freely generated by the following objects and morphisms:
\begin{itemize}
\item the two generating objects, $\objr$ (right) and $\objl$ (left), whose identities we will depict respectively as the directed wires $\idright$ and $\idleft$;
\item the generating morphisms in~\eqref{eq:syntax-acyclic} and~\eqref{eq:syntax-loops} below. 
\begin{equation}
  \label{eq:syntax-acyclic}
\splitzero,\losezero,\splitone,\loseone\,,\join,\start, \priorityedge{p} 
\end{equation}
\begin{equation}
\label{eq:syntax-loops}

\InputIfFileExists{cap-down.tikz}{}{\input{./tikz/cap-down.tikz}}
,
\InputIfFileExists{cup-down.tikz}{}{\input{./tikz/cup-down.tikz}}

\end{equation}
\end{itemize}
\end{definition}
Thus, objects of $\Syntax$ are words over the alphabet $\{\objr,\objl\}$ and the monoidal product is given by concatenation on objects, with the empty word $\epsilon$ denoting the unit. Morphisms of $\Syntax$ are vertical and horizontal compositions of the generators above, potentially including wire crossings (\emph{e.g.} $
\InputIfFileExists{sym.tikz}{}{\input{./tikz/sym.tikz}}
$) and identity wires ($\idright$ and $\idleft$) \emph{up to} the laws of symmetric monoidal categories (Fig.~\ref{fig:smc-axioms}). The direction of the arrows on the wires denotes their \emph{type}: for example, $\join$ represents an operation of type $\objr\objr\to\objr$, while $
\InputIfFileExists{cap-down.tikz}{}{\input{./tikz/cap-down.tikz}}
$ has type $\objl\objr\to \varepsilon$. As for open parity games, we call $v$ the \emph{domain} of a given diagram $d\from v\to w$ its \emph{codomain}. We depict a generic diagram $d\from v\to w$ as a box with $|v|$ ordered wires labelled by the elements of $v$ on the left and $|w|$ wires labelled by the elements of $w$ on the right. When we have $n$ parallel of the same type, say $\objr$, we depict them as a single directed wire labelled by a natural number label, as $\idright^{\!\!\!\!\!\! n}$ . We can compose these diagrams in two different ways: horizontally by connecting the right wires of the first diagram to the left wires of the second (when the types match), and vertically by simply juxtaposing two diagrams: 
$$

\InputIfFileExists{seq-compose.tikz}{}{\input{./tikz/seq-compose.tikz}}
\qquad \quad 
\InputIfFileExists{par-compose.tikz}{}{\input{./tikz/par-compose.tikz}}

$$ 
The symmetry is drawn as a wire crossing $
\InputIfFileExists{sym.tikz}{}{\input{./tikz/sym.tikz}}
$, and the unit for the monoidal product as the empty diagram $
\InputIfFileExists{empty-diag.tikz}{}{\input{./tikz/empty-diag.tikz}}
$. With this representation the laws of Fig.~\ref{fig:smc-axioms} become diagrammatic near-tautologies. We call \emph{inputs} the incoming wires of a diagram, and \emph{outputs} its outgoing wires. More formally, the inputs (resp. outputs) of $d\from v\to w$ are the set of positions of the word $v$ which are $\objr$ (resp. $\objl$) and the position of $w$ which are $\objl$ (resp. $\objr$). Note that we will often omit the arrows on wires when they are unambiguous, to avoid overloading diagrams. 

We will introduce the semantics of the generators in terms of open parity games in the next subsection, but let us give some intuition now: $\splitzero$ represents a single Player $0$ position with one entry and two exit positions, $\losezero$ represents a single Player $0$ position connected to one entry and no exit positions (which is therefore loosing); $\splitone$ and $\loseone$ represent the corresponding games for Player $1$; $\join$ and $\start$ represent two open parity games with a single exit and two or zero entry position respectively (it does not matter to whom they belong since there is not choice of exit); $\priorityedge{k}$ represents a position with priority $k$, connected to a single entry and exit position; finally, $
\InputIfFileExists{cap-down.tikz}{}{\input{./tikz/cap-down.tikz}}
$ and $
\InputIfFileExists{cup-down.tikz}{}{\input{./tikz/cup-down.tikz}}
$ are pieces of syntax that allow us to bend wires and form feedback loops.

Finally, we will sometimes use generalised $\join$ and $\start$ on $n$-wires, depicted as $
\InputIfFileExists{mergen.tikz}{}{\input{./tikz/mergen.tikz}}
$ and $
\InputIfFileExists{startn.tikz}{}{\input{./tikz/startn.tikz}}
$. These are definable from $\join$ and $\start$ using the operations of composition, monoidal product, and wire crossings.

\begin{figure*}
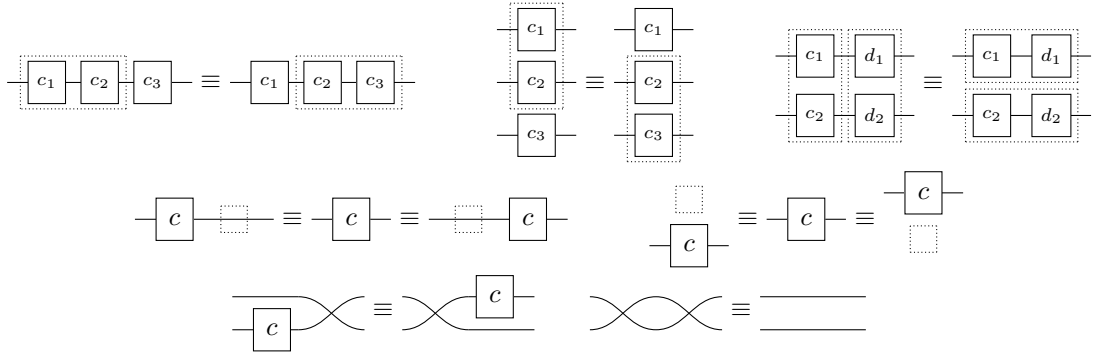

\begin{equation*}
\begin{gathered}
{
\InputIfFileExists{smc/sequential-associativity.tikz}{}{\input{./tikz/smc/sequential-associativity.tikz}}
 \SMCeq 
\InputIfFileExists{smc/sequential-associativity-1.tikz}{}{\input{./tikz/smc/sequential-associativity-1.tikz}}
} \qquad \quad {
\InputIfFileExists{smc/parallel-associativity.tikz}{}{\input{./tikz/smc/parallel-associativity.tikz}}
 \SMCeq 
\InputIfFileExists{smc/parallel-associativity-1.tikz}{}{\input{./tikz/smc/parallel-associativity-1.tikz}}
}\qquad\quad  {
\InputIfFileExists{smc/interchange-law.tikz}{}{\input{./tikz/smc/interchange-law.tikz}}
\SMCeq
\InputIfFileExists{smc/interchange-law-1.tikz}{}{\input{./tikz/smc/interchange-law-1.tikz}}
 }
 \\
{
\InputIfFileExists{smc/unit-right.tikz}{}{\input{./tikz/smc/unit-right.tikz}}
 \SMCeq \diagbox{c}{}{} \SMCeq 
\InputIfFileExists{smc/unit-left.tikz}{}{\input{./tikz/smc/unit-left.tikz}}
}
\qquad\quad
{ 
\InputIfFileExists{smc/parallel-unit-above.tikz}{}{\input{./tikz/smc/parallel-unit-above.tikz}}
 \SMCeq \diagbox{c}{}{} \SMCeq  
\InputIfFileExists{smc/parallel-unit-below.tikz}{}{\input{./tikz/smc/parallel-unit-below.tikz}}
}
\\
{
\InputIfFileExists{smc/sym-natural.tikz}{}{\input{./tikz/smc/sym-natural.tikz}}
 \SMCeq 
\InputIfFileExists{smc/sym-natural-1.tikz}{}{\input{./tikz/smc/sym-natural-1.tikz}}
}
\qquad
{
\InputIfFileExists{smc/sym-iso.tikz}{}{\input{./tikz/smc/sym-iso.tikz}}
 \SMCeq 
\InputIfFileExists{id2.tikz}{}{\input{./tikz/id2.tikz}}
}
\end{gathered}
\end{equation*}
\caption{Axioms of symmetric monoidal categories. The variables $c,c_1,c_2,\dots, d_1,\dots $ range over all possible diagrams of the appropriate type (labels of domains and codomains are omitted for legibility). The dotted frames on the first line indicate the order of application of composition and the monoidal product; on the second line they indicate identities when they surround a plain wire, or the unit for the monoidal product when they surround an empty space.}
\label{fig:smc-axioms}
\end{figure*}

\subsection{Semantics of diagrams}
\label{sec:semantics-int-construction}

We are now ready to define the semantics of our diagram as a symmetric monoidal functor from $\Syntax$ to $\Semantics$. Since $\Syntax$ is free, we only need to specify how to interpret each of the generating objects and morphisms.
\begin{definition}\label{def:semantic-functor}
Let $\sem{\cdot}\from \Syntax\to \Semantics$ be the symmetric monoidal functor fully specified by
\begin{itemize}
\item $\sem{\objr} = (1,0)$ and $\sem{\objl} = (0,1)$ on objects;
\item the mapping given in Fig.~\ref{fig:interpretation} on morphisms.
\end{itemize}
\end{definition}
Then, by symmetric monoidal functoriality, the semantics of the (horizontal) composite of two diagrams $c\from u\to v$ and $d\from v\to w$ is given by $\sem{c};\sem{d}$, following Definition~\ref{def:composition}, the semantics of the monoidal product (vertical juxtaposition) of two diagrams $d_1\from v_1\to w_1$ and $d_2\from v_2\to w_2$ is given by $\sem{d_1}\otimes\sem{d_2}$, following Definition~\ref{def:monoidal-product}, and the semantics of wire crossings is $\sem{
\InputIfFileExists{sym-vxw.tikz}{}{\input{./tikz/sym-vxw.tikz}}
} = \sigma^{\sem{v}}_{\sem{w}}$, the swap game from Example~\ref{def:empty-swap} (\emph{ii}). 
\begin{figure*}[t]
\begin{align*}
\sem{\splitzero} & = \big(1,2,\{v\},\{v\},\{(\entry{1},v),(v,\exit{1}),(v,\exit{2})\},\{v\mapsto 0\}\big) &
\\
\sem{\losezero\,} & = \big(1,0,\{v\},\{v\},\{(\entry{1},v)\},\{v\mapsto 0\}\big) &
\\
\sem{\splitone} & = \big(1,2,\{v\},\emptyset,\{(\entry{1},v),(v,\exit{1}),(v,\exit{2})\},\{v\mapsto 0\}\big) & \sem{
\InputIfFileExists{cap-down.tikz}{}{\input{./tikz/cap-down.tikz}}
}  = \big(1,1,\emptyset,\emptyset,\{(\entry{1},\exit{1})\},!_\emptyset\big)
\\
\sem{\loseone\,} & = \big(1,0,\{v\},\emptyset,\{(\entry{1},v)\},\{v\mapsto 0\}\}\big) & \sem{
\InputIfFileExists{cup-down.tikz}{}{\input{./tikz/cup-down.tikz}}
}  = \big(1,1,\emptyset,\emptyset,\{(\entry{1},\exit{1})\},!_\emptyset\big)
\\
\sem{\priorityedge{p}} & = \big(1,1,\{v\},\emptyset,\{(\entry{1},v), (v,\exit{1})\},\{v\mapsto p\}\big) &
\\
\sem{\join} & = \big(2,1,\{v\},\emptyset,\{(\entry{1},v), (\entry{2},v),(v,\exit{1})\},\{v\mapsto 0\}\big) &
\\
\sem{\,\start} & = \big(0,1,\{v\},\emptyset,\{(v,\exit{1})\},\{v\mapsto 0\}\big) &
\end{align*}
\caption{Interpretation of generators as open parity games. Recall that the first two components are the type (domain and codomain) of the corresponding game, the third is the set of internal positions (either empty for the cups and caps, or some arbitrary single node labelled $v$ for all the other generators), the fourth is the set of nodes that belong to Player $0$, the fifth is the set of edges, and the last one is the labelling function. Notice that the cup and cap have the same denotation}
\label{fig:interpretation}
\end{figure*}


Now that we have fixed the syntax and semantics of our diagrammatic language, we can check that it is sufficiently expressive to encode arbitrary open parity games. Before reading the proof of the following theorem, we encourage the reader to consult the example in Fig.~\ref{fig:ex-opg-to-diag}.
\begin{figure*}
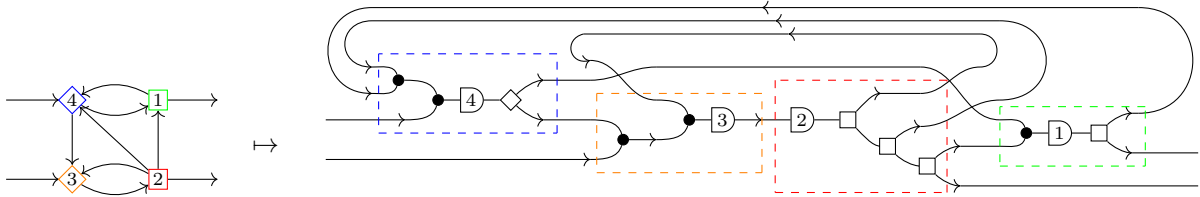

$$

\InputIfFileExists{example-opg-colour.tikz}{}{\input{./tikz/example-opg-colour.tikz}}
 \quad\mapsto\quad 
\InputIfFileExists{example-game-diag-1.tikz}{}{\input{./tikz/example-game-diag-1.tikz}}

$$
\caption{Translation of the open parity game in Example~\ref{ex:opg} into a diagram with the same semantics. The framed sub-diagrams encode the positions of the same colour on the left.}\label{fig:ex-opg-to-diag}
\end{figure*}
\begin{theorem}[Universality]
\label{thm:universality}
For any open parity game $\oPG{A}\from m\to n$ there exists a diagram $d_\oPG{A}\from \objr^m\to\objr^n$ such that $\sem{d_\oPG{A}} \oPGeq  \oPG{A}$.
\end{theorem}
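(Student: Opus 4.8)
The plan is to give an explicit compositional encoding: translate each internal position of $\oPG{A}$ into a small \emph{gadget} built from the generators, place all gadgets side by side with the monoidal product, and then use a layer of wire crossings together with the caps and cups of~\eqref{eq:syntax-loops} to reconnect them according to the edge relation $E$. Since the type $\objr^m\to\objr^n$ forces $\oPG{A}$ to have domain $(m,0)$ and codomain $(n,0)$, I may assume that all entry positions lie on the left and all exit positions on the right, which is exactly the orientation the construction targets; a genuinely bidirectional boundary would be handled by additionally bending the remaining wires with~\eqref{eq:syntax-loops}. Throughout I only aim for semantic equivalence $\oPGeq$, so the encoding is allowed to introduce auxiliary deterministic nodes that do not affect the outcome of any play.

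First I would build the gadget for a single position. Fix an internal position $v$ with priority $p$, owner Player $\rho\in\{0,1\}$, in-degree $a$ and out-degree $b$. I collect its $a$ incoming wires into one with the generalised merge $\tikzfig{mergen}$ (which is $\start$ when $a=0$), pass through the priority generator $\priorityedge{p}$, and then fan out to the $b$ outgoing wires with a tree of copies of the split generator for Player $\rho$ from~\eqref{eq:syntax-acyclic}; when $b=0$ I instead terminate with the corresponding losing generator. The resulting gadget $g_v\from\objr^a\to\objr^b$ has exactly one input wire per in-edge and one output wire per out-edge of $v$. The only genuine choice inside $g_v$ is made at the split tree and is owned by the correct player; every other node it introduces (the merges and the priority node) is deterministic, so its ownership is immaterial, and all these auxiliary nodes carry priority $0$.

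Next I would assemble the global diagram. Taking the monoidal product $\bigotimes_v g_v$ yields a diagram $G\from\objr^{A}\to\objr^{B}$ whose $A=\sum_v a_v$ inputs are the incoming half-edges of internal positions and whose $B=\sum_v b_v$ outputs are the outgoing half-edges. Counting edges two ways shows that $E$ is a bijection between the \emph{edge sources} (entries together with outgoing half-edges) and the \emph{edge targets} (incoming half-edges together with exits). I realise this bijection with a routing layer: edges of the form entry-to-internal, internal-to-exit, and entry-to-exit are plain wires routed by symmetries, while each feedback edge internal-to-internal is closed by bending one output of $G$ back to the appropriate input of $G$ using one cap and one cup from~\eqref{eq:syntax-loops}. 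Composing $G$ with this routing layer, and keeping the $m$ entry wires on the left boundary and the $n$ exit wires on the right boundary, gives the desired $d_\oPG{A}\from\objr^m\to\objr^n$.

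Finally I would verify $\sem{d_\oPG{A}}\oPGeq\oPG{A}$. Because $\sem{\cdot}$ is a symmetric monoidal functor, $\sem{d_\oPG{A}}$ is computed, via Definitions~\ref{def:composition} and~\ref{def:monoidal-product}, by gluing the images of the gadgets and of the caps and cups; since $\sem{\tikzfig{cap-down}}$ and $\sem{\tikzfig{cup-down}}$ each denote nothing but a single edge between the two wires they connect, the result is precisely $\oPG{A}$ with every position $v$ replaced by the sub-game described by $g_v$. The crux of the argument is therefore a semantic lemma stating that this expansion preserves the denotation of every play, and hence the semantics: inserting a deterministic node of priority $0$ on an edge changes no play denotation, since its owner is irrelevant and $0$ lies below every priority, so neither the maximum along a finite play (Definition~\ref{def:play-denotation}) nor the $\limsup$ of an infinite play is affected; and a tree of priority-$0$ choice nodes owned by Player $\rho$ offers exactly the same choices, with the same induced play denotations, as a single $b$-way choice owned by Player $\rho$. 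Establishing this correspondence between strategies and plays in $\oPG{A}$ and in the expanded game, and checking that it respects the winning conditions, is the main obstacle; once it is in place, the two games have the same semantics by Definition~\ref{def:denotation-semantics}, so $\sem{d_\oPG{A}}\oPGeq\oPG{A}$.
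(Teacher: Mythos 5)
Your proposal is correct and follows essentially the same route as the paper: encode each internal position as a merge--priority--split gadget, juxtapose the gadgets with the monoidal product, and close the internal edges with caps and cups, justified by the handshaking count. The only difference is that you spell out the final semantic verification (that inserting deterministic priority-$0$ nodes and expanding an $n$-ary choice into a tree preserves all play denotations), which the paper leaves implicit with ``by construction''; this is a welcome addition but not a different argument.
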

\begin{proof}
To prove this, we can encode the graph of $\oPG{A}$ directly, as follows: first, for every Player $0$ position $v\in V^\oPG{A}$ with priority $k$, $m_v$ incoming edges and $n_v$ outgoing edges let $d_v$ be the diagram obtained by sequentially composing $m_v-1$ $\join$-generators, followed by a $\priorityedge{k}$, and $n_v-1$ $\splitzero$ generators, as depicted below:
$$
d_v := 
\InputIfFileExists{0-position-encoding.tikz}{}{\input{./tikz/0-position-encoding.tikz}}

$$
Similarly for Player $1$ positions, let
$$
d_v := 
\InputIfFileExists{1-position-encoding.tikz}{}{\input{./tikz/1-position-encoding.tikz}}

$$
Then, we can simply juxtapose all $d_v$ for $v\in V^\oPG{A}$ vertically using the monoidal product, obtaining a single diagram with $\sum_{v\in V^\oPG{A}} m_v + m$ inputs and $\sum_{v\in V^\oPG{A}} n_v + n$ outputs. We can then use $
\InputIfFileExists{cap-down.tikz}{}{\input{./tikz/cap-down.tikz}}
$ and $
\InputIfFileExists{cup-down.tikz}{}{\input{./tikz/cup-down.tikz}}
$ to connect each output that are is no exit positions of $\oPG{A}$ to the corresponding input that is not an entry position, forming the edges of $\oPG{A}$. These must match exactly, by the handshaking lemma for directed graphs, since $\sum_{v\in V^\oPG{A}} m_v = \sum_{v\in V^\oPG{A}}$ is the number of internal edges of $\oPG{A}$.

The diagram we obtain is $d_\oPG{A}$, which satisfies $\sem{d_\oPG{A}} \oPGeq  \oPG{A}$ by construction.
\end{proof}
Note that there is another way to prove the previous theorem: from our definition of the semantics of open parity games, it is clear that every game is equivalent to a finite one. We can therefore always construct $d_\oPG{A}$ to be a in a simple form that we will later call the \emph{normal form} of diagrams (Definition~\ref{def:acyclic-normal-form}).

\section{Axiomatisation}
\label{sec:axiomatisation}

This section contains our main technical results: an equational theory which we show is sound and complete for equivalence of open parity games. In other words, we will give a finite set of axioms which correspond to valid semantic equalities (soundness) and which are sufficient to derive any semantic equality between diagrams (completeness).

\subsection{Equational theory}
\label{sec:equational-theory}

The axioms of Parity Game Algebra (PGA) are given in Fig.~\ref{fig:compact-closed-axioms} to~\ref{fig:parity-axiom}. Let us explain them in more details.
\begin{enumerate}
\item Fig.~\ref{fig:compact-closed-axioms} contains the axioms of \emph{compact-closed categories}~\cite{kelly1980coherence}, a common feature of graphical calculi. These axioms allow us to bend and straighten wires at will, while only keeping track of their direction. Moreover, we include in this group an axiom which allows us to remove a single loop with no entry or exit points. From the perspective of parity games, such diagrams represent inaccessible games, whose outcome is therefore irrelevant.
\item Fig.~\ref{fig:axioms-distributive-lattices} contains the axioms of \emph{distributive lattices}. We can think of $\splitzero$ as a form of disjunction and $\splitone$ as a form of conjunction. Some of the axioms admit intuitive readings in terms of strategies. For example, $\losezero$ is a counit for $\splitzero$, witnessing the fact that Player $0$ would never choose to lose the game immediately, and would always continue on the branch that is not blocked by $\losezero$ instead. The same holds for Player $1$ with $\splitone$ and $\loseone$.
\item Fig.~\ref{fig:axioms-priorities} contains the axioms for the algebraic theory of priorities (also known as the colouring comonad~\cite{grellois2015finitary}). They guarantee that priorities distribute over $\join$ and $\start$, and that they compose sequentially by taking their maximum, a key feature of how they interact in parity games. Finally, since $0$ is the minimum of $\Priorities$, it acts as a unit for composition, \emph{i.e.} the identity.
\item Fig.~\ref{fig:axioms-priorities-lattices} contains axioms for the distributive law of priorities over lattices: in other words, the priority labels distribute over all four lattice operations (including constants). These equalities are self-evident from the point of view of strategies. The semantics of open games can be formulated explicitly as a the coKleisli category of a certain comonad obtained by composing two comonads via a distributive law~\cite{compositionalPG}. These axioms give an equational presentation of this law, though we will not make the connection with distributive laws precise in this paper.
\item Fig.~\ref{fig:parity-axiom} contains the axiom that encodes the parity condition. It states that, when the loop priority $p$ is even, Player $1$ has only one non-losing strategy: choosing the first exit and avoiding the loop which would be winning for Player $0$. In this case, Player $0$ will have two strategies: passing control to Player $1$ or taking the second exit. On the other hand, when $p$ is odd, Player $0$ has only one non-losing strategy: choosing the second exit, since Player $1$ has a winning strategy by taking the loop.  
\end{enumerate}
\begin{figure}
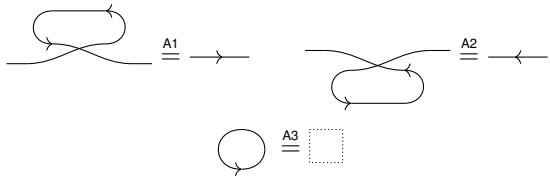

\[

\InputIfFileExists{yanking.tikz}{}{\input{./tikz/yanking.tikz}}
 \labeleq{A1} 
\InputIfFileExists{id.tikz}{}{\input{./tikz/id.tikz}}
 \qquad 
\InputIfFileExists{yanking-op.tikz}{}{\input{./tikz/yanking-op.tikz}}
 \labeleq{A2} 
\InputIfFileExists{id-op.tikz}{}{\input{./tikz/id-op.tikz}}

\]
\[

\InputIfFileExists{loop.tikz}{}{\input{./tikz/loop.tikz}}
 \labeleq{A3} 
\InputIfFileExists{empty-diag.tikz}{}{\input{./tikz/empty-diag.tikz}}

\]
\caption{Axioms of compact-closed categories (on the first line) and loop removal (on the second line)}
\label{fig:compact-closed-axioms}
\end{figure}
\begin{figure}
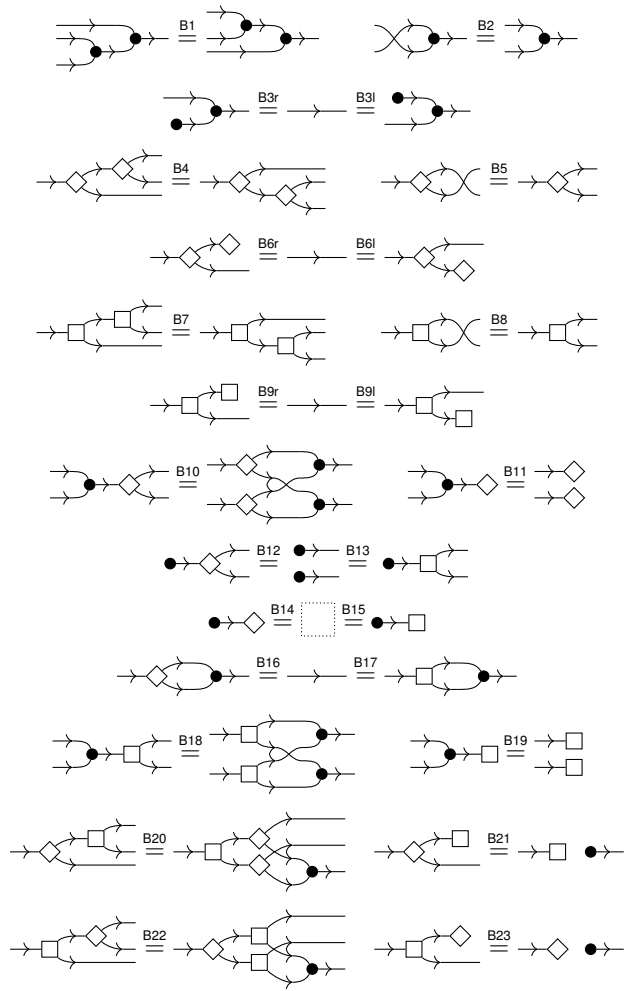

\[

\InputIfFileExists{idxmerge-merge.tikz}{}{\input{./tikz/idxmerge-merge.tikz}}
 \labeleq{B1} 
\InputIfFileExists{mergexid-merge.tikz}{}{\input{./tikz/mergexid-merge.tikz}}
 \qquad 
\InputIfFileExists{sym-merge.tikz}{}{\input{./tikz/sym-merge.tikz}}
\labeleq{B2}
\InputIfFileExists{merge.tikz}{}{\input{./tikz/merge.tikz}}

\] 
\[

\InputIfFileExists{idxstart-merge.tikz}{}{\input{./tikz/idxstart-merge.tikz}}
 \labeleq{B3r} 
\InputIfFileExists{id.tikz}{}{\input{./tikz/id.tikz}}
 \labeleq{B3l} 
\InputIfFileExists{startxid-merge.tikz}{}{\input{./tikz/startxid-merge.tikz}}

\]
\[

\InputIfFileExists{0-0xid.tikz}{}{\input{./tikz/0-0xid.tikz}}
 \labeleq{B4} 
\InputIfFileExists{0-idx0.tikz}{}{\input{./tikz/0-idx0.tikz}}
 \qquad 
\InputIfFileExists{0-sym.tikz}{}{\input{./tikz/0-sym.tikz}}
 \labeleq{B5} 
\InputIfFileExists{player-0.tikz}{}{\input{./tikz/player-0.tikz}}

\]
\[

\InputIfFileExists{0-lose0xid.tikz}{}{\input{./tikz/0-lose0xid.tikz}}
 \labeleq{B6r} 
\InputIfFileExists{id.tikz}{}{\input{./tikz/id.tikz}}
 \labeleq{B6l} 
\InputIfFileExists{0-idxlose0.tikz}{}{\input{./tikz/0-idxlose0.tikz}}

\]
\[

\InputIfFileExists{1-1xid.tikz}{}{\input{./tikz/1-1xid.tikz}}
 \labeleq{B7} 
\InputIfFileExists{1-idx1.tikz}{}{\input{./tikz/1-idx1.tikz}}
 \qquad 
\InputIfFileExists{1-sym.tikz}{}{\input{./tikz/1-sym.tikz}}
 \labeleq{B8} 
\InputIfFileExists{player-1.tikz}{}{\input{./tikz/player-1.tikz}}

\]
\[

\InputIfFileExists{1-lose1xid.tikz}{}{\input{./tikz/1-lose1xid.tikz}}
 \labeleq{B9r} 
\InputIfFileExists{id.tikz}{}{\input{./tikz/id.tikz}}
 \labeleq{B9l} 
\InputIfFileExists{1-idxlose1.tikz}{}{\input{./tikz/1-idxlose1.tikz}}

\]
\[

\InputIfFileExists{merge-0.tikz}{}{\input{./tikz/merge-0.tikz}}
 \labeleq{B10} 
\InputIfFileExists{0x0-sym-mergexmerge.tikz}{}{\input{./tikz/0x0-sym-mergexmerge.tikz}}

\qquad 

\InputIfFileExists{merge-lose0.tikz}{}{\input{./tikz/merge-lose0.tikz}}
 \labeleq{B11} 
\InputIfFileExists{lose0xlose0.tikz}{}{\input{./tikz/lose0xlose0.tikz}}

\]
\[

\InputIfFileExists{start-0.tikz}{}{\input{./tikz/start-0.tikz}}
 \labeleq{B12}\; 
\InputIfFileExists{startxstart.tikz}{}{\input{./tikz/startxstart.tikz}}
 \labeleq{B13}\; 
\InputIfFileExists{start-1.tikz}{}{\input{./tikz/start-1.tikz}}

\]
\[

\InputIfFileExists{start-lose0.tikz}{}{\input{./tikz/start-lose0.tikz}}
 \labeleq{B14} 
\InputIfFileExists{empty-diag.tikz}{}{\input{./tikz/empty-diag.tikz}}
 \labeleq{B15}
\InputIfFileExists{start-lose1.tikz}{}{\input{./tikz/start-lose1.tikz}}
 
\]
\[

\InputIfFileExists{0-merge.tikz}{}{\input{./tikz/0-merge.tikz}}
 \labeleq{B16} 
\InputIfFileExists{id.tikz}{}{\input{./tikz/id.tikz}}
 \labeleq{B17} 
\InputIfFileExists{1-merge.tikz}{}{\input{./tikz/1-merge.tikz}}
 
\]
\[

\InputIfFileExists{merge-1.tikz}{}{\input{./tikz/merge-1.tikz}}
\labeleq{B18}
\InputIfFileExists{1x1-sym-mergexmerge.tikz}{}{\input{./tikz/1x1-sym-mergexmerge.tikz}}

\qquad

\InputIfFileExists{merge-lose1.tikz}{}{\input{./tikz/merge-lose1.tikz}}
 \labeleq{B19} 
\InputIfFileExists{lose1xlose1.tikz}{}{\input{./tikz/lose1xlose1.tikz}}

\]
\[

\InputIfFileExists{0-1xid.tikz}{}{\input{./tikz/0-1xid.tikz}}
 \labeleq{B20} 
\InputIfFileExists{1-0x0-idxmerge.tikz}{}{\input{./tikz/1-0x0-idxmerge.tikz}}
\quad 
\InputIfFileExists{0-lose1xid.tikz}{}{\input{./tikz/0-lose1xid.tikz}}
 \labeleq{B21} 
\InputIfFileExists{lose1-start.tikz}{}{\input{./tikz/lose1-start.tikz}}

\]
\[

\InputIfFileExists{1-0xid.tikz}{}{\input{./tikz/1-0xid.tikz}}
 \labeleq{B22} 
\InputIfFileExists{0-1x1-idxmerge.tikz}{}{\input{./tikz/0-1x1-idxmerge.tikz}}
\quad
\InputIfFileExists{1-lose0xid.tikz}{}{\input{./tikz/1-lose0xid.tikz}}
 \labeleq{B23} 
\InputIfFileExists{lose0-start.tikz}{}{\input{./tikz/lose0-start.tikz}}

\]
\caption{Distributive lattice axioms}
\label{fig:axioms-distributive-lattices}
\end{figure}
\begin{figure}
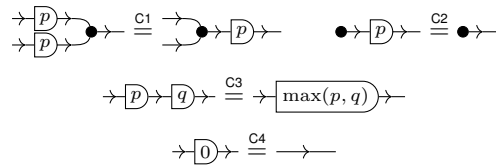

\[

\InputIfFileExists{kxk-merge.tikz}{}{\input{./tikz/kxk-merge.tikz}}
 \labeleq{C1} 
\InputIfFileExists{merge-k.tikz}{}{\input{./tikz/merge-k.tikz}}
 \qquad 
\InputIfFileExists{start-k.tikz}{}{\input{./tikz/start-k.tikz}}
 \labeleq{C2} 
\InputIfFileExists{start.tikz}{}{\input{./tikz/start.tikz}}

\]
\[

\InputIfFileExists{k-l.tikz}{}{\input{./tikz/k-l.tikz}}
 \labeleq{C3} 
\InputIfFileExists{max-kl.tikz}{}{\input{./tikz/max-kl.tikz}}
 
\]
\[
\priorityedge{0} \labeleq{C4} 
\InputIfFileExists{id.tikz}{}{\input{./tikz/id.tikz}}
 
\]
\caption{Priority axioms}
\label{fig:axioms-priorities}
\end{figure}
\begin{figure}
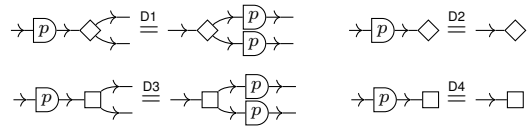

\[

\InputIfFileExists{k-0.tikz}{}{\input{./tikz/k-0.tikz}}
 \labeleq{D1} 
\InputIfFileExists{0-kxk.tikz}{}{\input{./tikz/0-kxk.tikz}}
 \qquad 
\InputIfFileExists{k-lose0.tikz}{}{\input{./tikz/k-lose0.tikz}}
 \labeleq{D2} 
\InputIfFileExists{lose-0.tikz}{}{\input{./tikz/lose-0.tikz}}

\]
\[

\InputIfFileExists{k-1.tikz}{}{\input{./tikz/k-1.tikz}}
 \labeleq{D3} 
\InputIfFileExists{1-kxk.tikz}{}{\input{./tikz/1-kxk.tikz}}
 \qquad 
\InputIfFileExists{k-lose1.tikz}{}{\input{./tikz/k-lose1.tikz}}
 \labeleq{D4} 
\InputIfFileExists{lose-1.tikz}{}{\input{./tikz/lose-1.tikz}}

\]
\caption{Axioms for the distributive law of priorities over lattices}
\label{fig:axioms-priorities-lattices}
\end{figure}
\begin{figure}
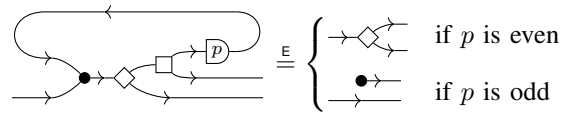

\[

\InputIfFileExists{parity-axiom.tikz}{}{\input{./tikz/parity-axiom.tikz}}
 \labeleq{E} \begin{cases}
			
\InputIfFileExists{player-0.tikz}{}{\input{./tikz/player-0.tikz}}
 & \text{if $p$ is even}\\[7pt]
            
\InputIfFileExists{startxid.tikz}{}{\input{./tikz/startxid.tikz}}
 & \text{if $p$ is odd}
		 \end{cases}
\]
\caption{Parity axiom}
\label{fig:parity-axiom}
\end{figure}

Finally, as mentioned earlier, since $\Syntax$ is defined to be a symmetric monoidal category, our diagrams also obey the laws of Fig.\ref{fig:smc-axioms}. We do not think of these axioms as part of our equational theory, but rather as a form of structural equivalence between diagrams, which we will often use implicitly. When we do use it explicitly, we will denote it by $\SMCeq$. 

We write $\eqPG$ for the smallest congruence (w.r.t to vertical and horizontal compositions) that includes the axioms of PGA. In practice, diagrammatic reasoning works like a two-dimensional generalisation of standard algebraic reasoning: if we find a sub-diagram that matches one side of an axiom in a larger diagram, we can replace it with the other side of the axiom (the left and right hand side of any axiom have the same type)~\cite[Section 2.1]{piedeleu2023introduction}

\begin{theorem}[Soundness]
\label{thm:soundness}
For any two diagrams $c,d\from v\to w$, if  $c\eqPG d$ then $\sem{c} \oPGeq \sem{d}$.
\end{theorem}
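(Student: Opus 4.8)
The plan is to prove soundness by the standard strategy for equational theories over a free symmetric monoidal category: since $\eqPG$ is defined as the smallest congruence containing the PGA axioms, and since $\sem{\cdot}$ is a symmetric monoidal functor (hence automatically respects the structural laws $\SMCeq$ of Fig.~\ref{fig:smc-axioms} and is compatible with both compositions), it suffices to verify that the two sides of \emph{each individual axiom} have the same semantics under $\sem{\cdot}$. Concretely, for every axiom $\ell \labeleq{X} r$ of Fig.~\ref{fig:compact-closed-axioms} through Fig.~\ref{fig:parity-axiom}, I would check $\sem{\ell} \oPGeq \sem{r}$; soundness for arbitrary derivations then follows by induction on the length of the derivation, using that $\oPGeq$ is a congruence (Theorem~\ref{thm:opg-traced-monoidal-category}) to lift each local rewrite through surrounding context.

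The bulk of the work is therefore a finite, axiom-by-axiom case analysis, and the right tool for each case is the strategy-based semantics of Definition~\ref{def:denotation-semantics} rather than the abstract relational composition (as the authors themselves note in Remark~\ref{rmk:upward-closed}). For each axiom I would compute $\sem{\ell}$ and $\sem{r}$ by first forming the composite open parity games via Definitions~\ref{def:composition} and~\ref{def:monoidal-product} applied to the generator interpretations of Fig.~\ref{fig:interpretation}, and then comparing the two semantic relations $\oPGsem{\sem{\ell}}$ and $\oPGsem{\sem{r}}$ as subsets of $\Powerset{\exits{}\times\Priorities}\times\entries{}$. Most cases are routine: the compact-closed axioms (A1)--(A3) follow because cups and caps denote plain identity edges (Fig.~\ref{fig:interpretation}) and a closed loop carries no entry or exit position, so contributes nothing to the semantics; the distributive-lattice axioms (B1)--(B23) reduce to elementary facts about how Player $0$ (disjunctive, choosing the \emph{best} exit, hence the smaller upward-closed $T$) and Player $1$ (conjunctive) resolve choices; and the priority axioms (C1)--(C4), (D1)--(D4) follow from the observation that the denotation of a finite play records the \emph{maximum} priority encountered, so sequential composition takes the max, $0$ is neutral, and priorities distribute over the merge and start generators.

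The main obstacle will be the parity axiom (E) in Fig.~\ref{fig:parity-axiom}, which is the only place the actual parity condition (the $\limsup$ winning criterion for infinite plays) enters, and hence the only genuinely non-trivial semantic computation. Here I would analyse the loop on the left-hand side explicitly: the feedback edge creates plays that either eventually exit or loop forever, and the denotation of a looping play is $\top$ or $\bot$ according to whether the maximum priority seen infinitely often — which on this single loop is exactly $p$ — is even or odd. When $p$ is even, any $1$-strategy that stays in the loop produces $\top$ (winning for Player $0$), so Player $1$'s only non-losing choice is to leave via the first exit while Player $0$ retains the genuine choice between deferring to Player $1$ and taking the second exit; this matches the semantics of $\splitzero$ (the diagram $\tikzfig{player-0}$). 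When $p$ is odd, looping produces $\bot$, giving Player $1$ a winning strategy, so Player $0$'s only non-losing option is to take the second exit immediately, which matches $\tikzfig{startxid}$. The careful bookkeeping here — enumerating the finitely many relevant strategy pairs, computing each induced play's denotation, and checking the resulting upward-closed sets coincide on both sides — is where I expect essentially all the difficulty of the soundness proof to reside.
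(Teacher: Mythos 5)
Your proposal is correct and follows essentially the same route as the paper: reduce soundness to a finite axiom-by-axiom check (justified by $\eqPG$ being the smallest congruence, $\sem{\cdot}$ being a symmetric monoidal functor, and $\oPGeq$ being a congruence by Theorem~\ref{thm:opg-traced-monoidal-category}), then verify each axiom using the strategy-based semantics of Definition~\ref{def:denotation-semantics}. The only difference is emphasis --- the paper works out axiom (B20) in detail and declares the rest analogous, while you sketch all groups and single out the parity axiom (E) as the substantive case --- but your analysis of (E) matches the paper's own informal justification of that axiom.
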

\begin{proof}
It suffices to verify that all the axioms of PGA correspond to valid semantic equivalences. We have already given an intuitive account of some of the axioms in the description above; let us cover one example in more detail:
$$
d_{lhs} := 
\InputIfFileExists{0-1xid.tikz}{}{\input{./tikz/0-1xid.tikz}}
 \labeleq{B20} 
\InputIfFileExists{1-0x0-idxmerge.tikz}{}{\input{./tikz/1-0x0-idxmerge.tikz}}
 =: d_{rhs}
$$
Both diagrams have a single entry position, so their semantics is the set of sets (of denotations of) possible plays starting from $1$ that oppose a Player $0$ strategy to a Player $1$ strategy (Definition~\ref{def:denotation-semantics}). Both diagrams correspond to games that only allow finite plays, so the denotation of any play will be a pair of an exit position and a priority (which in this case, will always be zero, since all priorities are zero here). 

For the lhs, Player $1$ has two strategies: move from the only internal position they control to either of the first two exit positions. Player $0$ also has two strategies: move to the internal position controlled by Player $1$ or to the last exit position. Playing these strategies against each other and taking the denotation of the induced plays gives:
$$
\oPGsem{\sem{d_{lhs}}} = \big\{(T,1)\mid \{(1,0), (2,0)\}\subseteq T \text{ or } \{(3,0)\}\subseteq T\big\}
$$
For the rhs, Player $0$ has four strategies: two exit choices for each of the two internal positions that they control. Player $1$ on the other hand only has two strategies: move to either of the two nodes that Player $0$ controls. Putting these together, we also have 
\begin{align*}
\oPGsem{\sem{d_{lhs}}} = \big\{(T,1) \,\mid\; &\{(1,0), (2,0)\}\subseteq T & 
\\
\; \text{ or } &\{(1,0), (3,0)\}\subseteq T&
\\
\; \text{ or } &\{(2,0), (3,0)\}\subseteq T &
\\
\;\text{ or } &\{(3,0)\}\subseteq T & \big\}
\end{align*}
which is equal to $\oPGsem{\sem{d_{lhs}}}$.
The soundness of all other axioms can be verified analogously.
\end{proof}
\begin{figure*}
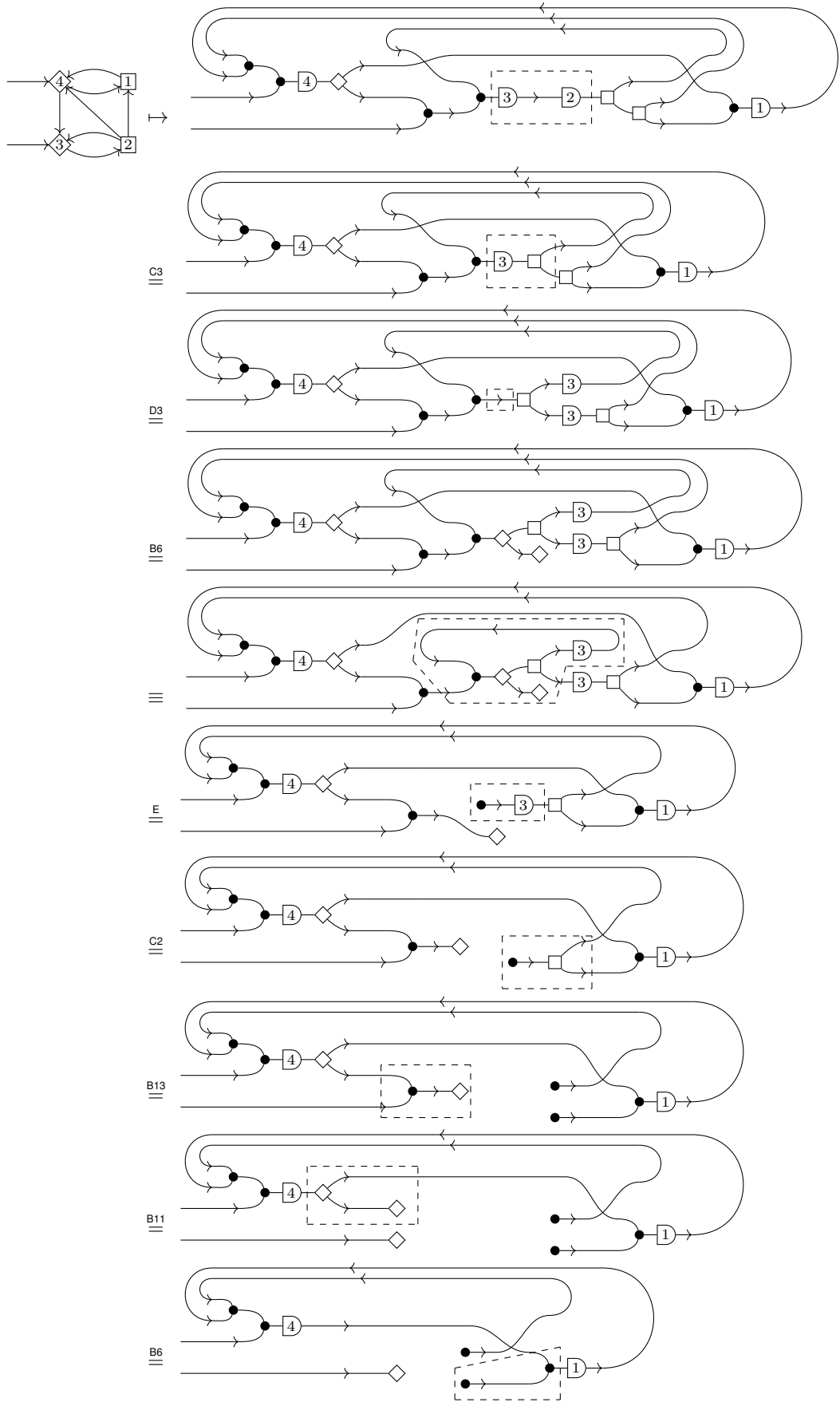

\begin{align*}

\InputIfFileExists{example-opg-1-no-outputs.tikz}{}{\input{./tikz/example-opg-1-no-outputs.tikz}}
\;&\mapsto\; 
\InputIfFileExists{example-solution.tikz}{}{\input{./tikz/example-solution.tikz}}
 \\
&\labeleq{C3}\; 
\InputIfFileExists{example-solution-1.tikz}{}{\input{./tikz/example-solution-1.tikz}}

\\  
&\labeleq{D3}\; 
\InputIfFileExists{example-solution-2.tikz}{}{\input{./tikz/example-solution-2.tikz}}

 \\
& \labeleq{B6}\; 
\InputIfFileExists{example-solution-3.tikz}{}{\input{./tikz/example-solution-3.tikz}}

 \\
& \SMCeq\; 
\InputIfFileExists{example-solution-3-bis.tikz}{}{\input{./tikz/example-solution-3-bis.tikz}}

   \\
&\labeleq{E}\; 
\InputIfFileExists{example-solution-5.tikz}{}{\input{./tikz/example-solution-5.tikz}}

   \\
&\labeleq{C2}\; 
\InputIfFileExists{example-solution-6.tikz}{}{\input{./tikz/example-solution-6.tikz}}

   \\
&\labeleq{B13}\; 
\InputIfFileExists{example-solution-7.tikz}{}{\input{./tikz/example-solution-7.tikz}}

   \\
&\labeleq{B11}\; 
\InputIfFileExists{example-solution-8.tikz}{}{\input{./tikz/example-solution-8.tikz}}

\\
& \labeleq{B6}\; 
\InputIfFileExists{example-solution-9.tikz}{}{\input{./tikz/example-solution-9.tikz}}

\end{align*}
\caption{Worked example (1/2). We solve the game via diagrammatic equational reasoning; the matching sub-diagram for each equation is highlighted in a dashed box at each step. Continued on the next page.}
\label{fig:example-solution-1}
\end{figure*}
\begin{figure*}
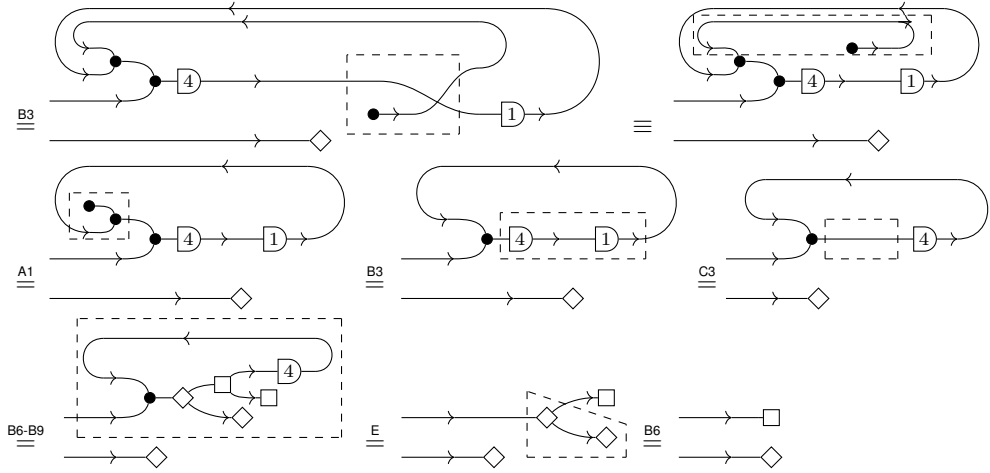

\begin{align*}
\labeleq{B3}&\; 
\InputIfFileExists{example-solution-10.tikz}{}{\input{./tikz/example-solution-10.tikz}}
\;
\SMCeq \; 
\InputIfFileExists{example-solution-10-bis.tikz}{}{\input{./tikz/example-solution-10-bis.tikz}}

\\
 \labeleq{A1} &\;
\InputIfFileExists{example-solution-11.tikz}{}{\input{./tikz/example-solution-11.tikz}}

\labeleq{B3}\; 
\InputIfFileExists{example-solution-12.tikz}{}{\input{./tikz/example-solution-12.tikz}}
 
\labeleq{C3} 
\InputIfFileExists{example-solution-14.tikz}{}{\input{./tikz/example-solution-14.tikz}}
 
\\
\labeleq{B6-B9} &\quad  
\InputIfFileExists{example-solution-15.tikz}{}{\input{./tikz/example-solution-15.tikz}}

\;\;\labeleq{E}\; 
\InputIfFileExists{example-solution-16.tikz}{}{\input{./tikz/example-solution-16.tikz}}
 \;\labeleq{B6}\; 
\InputIfFileExists{example-solution-17.tikz}{}{\input{./tikz/example-solution-17.tikz}}

\end{align*}
\caption{Worked example (2/2). The last diagram confirms the informal reasoning of Example~\ref{ex:pg}: the top left position is winning for Player $0$ and the bottom left one is winning for Player $1$.}
\label{fig:example-solution-2}
\end{figure*}

\begin{remark}[Free prop vs free traced prop]
We could have presented our syntax as a free \emph{traced} prop instead of a free prop. 
However, free traced props are less standard in the literature, and the original work of Watanabe et al prefers a free prop too, perhaps because a compact closed syntax is simpler  to use and to present, axiomatically. Another reason to prefer a free prop is that the diagrams one can draw in a compact closed syntax resembles more closely the directed graphs  commonly used to represent parity games. 
\end{remark}

\subsection{Completeness}
\label{sec:completeness}

Our proof of completeness follows a normal form argument, which proceeds in two steps:
\begin{enumerate}
\item First, we explain how to normalise diagrams corresponding to games that do not contain loops, \emph{i.e.} for which there are no non-trivial path from a position to itself. 
\item Then, we normalise arbitrary diagrams, by removing loops, thereby rewriting them to acyclic ones. 
\end{enumerate}

\subsubsection{Acyclic case}
\label{sec:acyclic-games} 
Formally, we say that a diagram $\objr^m\to\objr^n$ is \emph{acyclic} if it is composed only of generators from~\eqref{eq:syntax-acyclic}. As we said, these correspond to particularly simple open games, for which there are no infinite plays.

 The following normal form for acyclic diagrams is a generalisation of conjunctive normal form (CNF) for Boolean formulas. The difference is that it includes a layer of $\priorityedge{p}$. Here, by \emph{layer} (of a certain subset) of generators we mean a diagram composed exclusively of these generators, possibly including identities and wire-crossings.
\begin{definition}\label{def:acyclic-normal-form}
Any diagram $d\from \objr^m\to \objr^n$ is in \emph{normal form} when it factors into a layer of $\splitzero,\losezero$, followed by a layer of $\splitone,\loseone$, followed by a layer of $\priorityedge{p}$, followed by a layer of $\join,\start$.
\end{definition}
\begin{theorem}
\label{thm:acyclic-nf-unique}
If $c,d\from \objr^m\to \objr^n$ are two acyclic diagrams in normal form such that $\sem{c}\oPGeq\sem{d}$, then $c\eqPG d$.
\end{theorem}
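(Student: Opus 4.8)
The plan is to recast the theorem as a uniqueness-of-canonical-form result: I will show that the semantics $\oPGsem{\sem{c}}$ of an acyclic normal form records a purely combinatorial datum, that this datum is recoverable from the semantics alone, and that the PGA axioms suffice to rewrite any two normal forms carrying the same datum to a common representative. First I would read this datum off the layered shape of Definition~\ref{def:acyclic-normal-form}. Because the diagram is laid out strictly left to right and all Player~$0$ disjunctions sit in the first layer and all Player~$1$ conjunctions in the second, any play from an entry $i\in\ord m$ visits the layers in order, so a Player~$0$ strategy amounts to selecting a single wire at the interface between the first two layers (or a losing dead-end leaf). From such a wire the remaining layers — the Player~$1$ choices, the priority layer, and the merge/start layer — deterministically fix a set $S\subseteq\ord n\times\Priorities$ of (exit, maximal-priority) outcomes, namely the union over Player~$1$'s conjunctive responses. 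Hence, for each entry $i$, the family $\{T\in\Powerset{\ord n\times\Priorities}\mid (T,i)\in\oPGsem{\sem{c}}\}$ is the upward closure (under Definition~\ref{def:denotation-semantics}) of these sets $S$ over the non-losing branches, and its minimal elements form an antichain $\mathcal A_i$ of ``(exit, priority)-clauses''. This antichain is the prime-implicant data of a monotone Boolean function over the atoms $\ord n\times\Priorities$, and it is manifestly recoverable from $\oPGsem{\sem{c}}$ as the set of minimal $T$. Consequently $\sem{c}\oPGeq\sem{d}$ forces $c$ and $d$ to have the same family $(\mathcal A_i)_i$.

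Second, I would show that the PGA axioms bring every acyclic normal form to the canonical diagram determined by $(\mathcal A_i)_i$. The distributive-lattice axioms (Fig.~\ref{fig:axioms-distributive-lattices}) supply commutativity, associativity, idempotence and absorption for the disjunction and conjunction structures, together with the unit laws for the two losing generators; with these I can reorder branches within each layer, delete duplicated or subsumed clauses, and discard losing or trivial leaves, collapsing the Boolean part to the antichain of prime clauses exactly as in the classical reduction of a DNF to its prime implicants. The priority axioms (Fig.~\ref{fig:axioms-priorities}) let me compose consecutive priority generators by taking their maximum and use $0$ as the identity, so that the label recorded on each clause is precisely the maximum along its path; and the distributive-law axioms (Fig.~\ref{fig:axioms-priorities-lattices}) let me slide priority labels past all four lattice operations to gather them into the single canonical priority layer. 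The compact-closed and loop axioms (Fig.~\ref{fig:compact-closed-axioms}), together with the structural equivalence $\SMCeq$, account for the residual wiring (crossings, bends, and inaccessible loops). Applying all of this to both $c$ and $d$ produces the same canonical diagram, whence $c\eqPG d$.

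The main obstacle is the interaction of the priority labels with the Boolean structure. In the priority-free case this is exactly the classical uniqueness of prime-implicant representations of monotone Boolean functions, but here the literals are pairs in $\ord n\times\Priorities$: the same exit may be reached with different maximal priorities along different Player~$1$ routes, so the clauses live over $\ord n\times\Priorities$ rather than over $\ord n$, and no naive subsumption across distinct priorities is available. The crux is therefore twofold: first, to verify that the invariant $(\mathcal A_i)_i$ is genuinely well-defined — in particular that each clause records the correct maximal priority for each exit and that distinct priorities on a common exit are kept distinct — and second, that the priority axioms (groups C and D) are strong enough to canonicalise these annotations even inside the merge/start layer, where several clauses may be routed to one exit. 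Once this priority bookkeeping is shown to be axiomatisable, the remainder reduces to the familiar Boolean normal-form uniqueness argument and the theorem follows.
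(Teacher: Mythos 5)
Your overall strategy coincides with the paper's: read a combinatorial invariant off the layered structure of Definition~\ref{def:acyclic-normal-form}, argue that it is recoverable from $\oPGsem{\sem{c}}$, and show that the axioms identify any two normal forms carrying the same invariant. Your analysis of how a Player~$0$ strategy selects a wire at the interface of the first two layers, and of how that wire determines a clause $\possem{i,\sigma_0^k}_{\sem{d}}=\{(o_{k,j},p_{k,j})\mid 1\leq j\leq n^1_k\}$, matches the paper's computation exactly. Where you genuinely diverge is in what you take the invariant to be. The paper takes it to be the family of clauses itself and asserts that two semantically equivalent normal forms can differ only by duplicated pairs within a clause, duplicated clauses, and the shape of the $\splitzero$/$\splitone$ trees (handled respectively by C1 with B16/B17, by Lemma~\ref{lem:co-copy-del} with idempotence, and by co-associativity/co-commutativity). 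You instead take the antichain of \emph{minimal} clauses. Your choice is the more faithful one: since $\oPGsem{\cdot}$ records only the upward closure of $\{\possem{i,\sigma_0}\}_{\sigma_0}$ under $\subseteq$, a normal form may contain a clause strictly containing another --- the normal forms of $x$ and of $\choosezero{x}{(\chooseone{x}{y})}$ are semantically equivalent --- and such a discrepancy is not among the paper's three cases. The price of this refinement is that you must \emph{derive} absorption in PGA, something the paper's argument never has to invoke.

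That is exactly where your proposal has a gap: your final paragraph names the deletion of subsumed clauses and the canonicalisation of the priority annotations as ``the crux'' and then stops, asserting that the theorem follows ``once this priority bookkeeping is shown to be axiomatisable''. That conditional step is the heart of your argument and is not carried out, so as written you have only shown that $c$ and $d$ determine the same antichains, not that $c\eqPG d$. The step is fillable, and you should fill it explicitly: B21 and B23 give the absorbing laws $\choosezero{\top}{x}=\top$ and $\chooseone{\bot}{x}=\bot$, which together with the unit laws, distributivity (B20, B22) and Lemma~\ref{lem:co-copy-del} (co-copying and co-deleting of \emph{arbitrary} acyclic sub-diagrams, priorities included) yield absorption $\choosezero{b}{(\chooseone{b}{c})}\eqPG b$ for whole branches $b$. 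Moreover, your worry about cross-priority subsumption dissolves on inspection: once groups C and D have pushed every priority into the single layer preceding the $\join$/$\start$ layer, a literal is a wire carrying one $\priorityedge{p}$ into the $\join$-tree of one exit $o$, two literals are the same atom iff both $o$ and $p$ coincide (merged by C1 and B16/B17), and clause subsumption is plain containment of sets of pairs $(o,p)$, to which the derived absorption applies directly. Supplying that derivation (or, alternatively, strengthening the notion of normal form so that subsumed clauses are excluded by definition) is what is needed to close the argument.
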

\begin{proof}
This proof is the diagrammatic equivalent of showing that any two Boolean formulas in CNF are equal iff they contain the same clauses (\emph{i.e.} that lists of lists up to associativity, commutativity and idempotence are equivalent to sets of sets). The only difference comes from how we deal with the priority labels; but, since they distribute over other generators, the proof is similar.

First, we can always assume that $c$ and $d$ contain no sub-diagrams of the form
$

\InputIfFileExists{0-lose0xid.tikz}{}{\input{./tikz/0-lose0xid.tikz}}
$ or  $ 
\InputIfFileExists{0-idxlose0.tikz}{}{\input{./tikz/0-idxlose0.tikz}}

$
by applying co-unitality to rewrite them to an identity wire. Similarly, we can assume that the second layer of $c$ and $d$ does not contain sub-diagrams of the form 
$

\InputIfFileExists{1-lose1xid.tikz}{}{\input{./tikz/1-lose1xid.tikz}}
$ or $
\InputIfFileExists{1-idxlose1.tikz}{}{\input{./tikz/1-idxlose1.tikz}}

$
By the definition of $\sem{\cdot}$ (Fig.~\ref{fig:interpretation}), the semantics of acyclic diagrams are open games whose graph do not contain any loops and therefore only allow finite plays. By definition, the first layer of $d$ is made of $\splitzero$ or $\losezero$ generators. Thus, given some entry $i\in\ord{m}$ of $\sem{d}$, assume we first have $n_0$ sequentially composed $\splitzero$ generators (or a single $\losezero$ if $n_0=0$) connected to the $i$-th input in the first layer of $d$. Semantically, each of the possible $n_0+1$ outputs of these $n_0$ $\splitzero$ nodes, corresponds to a strategy of Player $0$, which we call $\sigma_0^k$ for $1\leq k\leq n_0+1$. 

Now, assume also that, we have $(n^1_j)_{1\leq j\leq n_0+1}$ $\splitone$ nodes (or $\loseone$ if $n^1_j=0$) connected to each of the $n_0+1$ outputs of this layer, the outputs of which are each followed by a single priority generator $\priorityedge{p}$, $p:=p_{k,j}$, for $1\leq k\leq n_0$ and $1\leq j\leq n^1_k$. Then, each of these priority generators is connected to an output $o_{k,j}\in\ord{n}$ via $\join$, which defines a possible strategy of Player $1$. 

By Definition~\ref{def:denotation-semantics}, the denotation of the pair $i,\sigma_0^k$ for $1\leq k\leq n_0+1$ is
$$
\possem{i,\sigma_0^k}_{\sem{d}} = \{(o_{k,j},p_{k,j})\mid 1\leq j\leq n^1_k\}
$$
Thus, we see from that any two acyclic diagrams $c,d\from \objr^m\to \objr^n$ in normal form are semantically equivalent if, for each input $i\in\ord{m}$, they have the same number of $\splitzero$ connected to $i$ in the first layer (or exactly one $\losezero$), followed by the same number of $\splitone$ or $\loseone$ for each of the outputs of the first layer, each followed by the same priority for each output to which they are connected via $\join$. Hence $c$ and $d$ can only differ in the following three ways:
\begin{enumerate}
\item they have a different number of $\splitone$ that encode the same pair $(o_{k,j},p_{k,j})\in X$ in the same $X\in \possem{i}_{\oPG{A}}$;
\item they have a different number of $\splitzero$ that encode the same set $\{(o_{k,j},p_{k,j})\mid 1\leq j\leq n^1_k\}$;
\item the tree of $\splitzero$ and $\splitone$ in the first two layers is different.
\end{enumerate}

For 1) we can use the fact that $\priorityedge{p}$  distributes over $\join$ (Fig.~\ref{fig:axioms-priorities}) and the fact that $\splitone$ is idempotent, \emph{i.e.} that $\splitone$ composed with $\join$ is equal to the identity wire (Fig.~\ref{fig:axioms-distributive-lattices}), as follows: 
\begin{align*}

\InputIfFileExists{1-pxp-merge.tikz}{}{\input{./tikz/1-pxp-merge.tikz}}
 \,\labeleq{C1}\,  
\InputIfFileExists{1-merge-p.tikz}{}{\input{./tikz/1-merge-p.tikz}}

 \,\labeleq{B17}\, \priorityedge{p}
\end{align*}

For 2) we can use 1) and the same fact for $\splitzero$. It is enough to show the following, for any $b\from \objr\to\objr^n$:
\begin{align*}

\InputIfFileExists{0-dxd-merge.tikz}{}{\input{./tikz/0-dxd-merge.tikz}}
 &=  
\InputIfFileExists{0-merge-d.tikz}{}{\input{./tikz/0-merge-d.tikz}}

\\
& \,\labeleq{B17}\, \lrdiag{b}{}{n}
\end{align*}
where the first equality uses the fact that acyclic diagrams can be merged, \emph{cf}. Lemma~\ref{lem:co-copy-del} in Appendix.

For 3) we can use co-associativity and co-commutativity of $\splitzero$ and $\splitone$ (in Fig.~\ref{fig:axioms-distributive-lattices}) to show that any two layers of $\splitzero$ (resp. $\splitone$) which form a tree (\emph{i.e.} are connected as undirected graphs) and contain the same number of $\splitzero$ (resp. $\splitone$) generators are equal. 
\end{proof}
\begin{theorem}[Acyclic completeness]\label{thm:acyclic-completeness}
For any two acyclic diagrams $c,d\from \objr^m\to\objr^n$, if $\sem{c} \oPGeq \sem{d}$ then $c\eqPG d$. 
\end{theorem}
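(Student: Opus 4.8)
The plan is to obtain acyclic completeness as an immediate consequence of the uniqueness of normal forms established in Theorem~\ref{thm:acyclic-nf-unique}, once we supply the missing \emph{existence} ingredient: every acyclic diagram is $\eqPG$-provably equal to one in the normal form of Definition~\ref{def:acyclic-normal-form}. Granting this normalisation lemma, the argument is short. Given acyclic $c,d\from\objr^m\to\objr^n$ with $\sem{c}\oPGeq\sem{d}$, I would normalise each to obtain $\hat c,\hat d$ in normal form with $c\eqPG\hat c$ and $d\eqPG\hat d$. By soundness (Theorem~\ref{thm:soundness}), $\sem{\hat c}\oPGeq\sem{c}$ and $\sem{\hat d}\oPGeq\sem{d}$, so that $\sem{\hat c}\oPGeq\sem{c}\oPGeq\sem{d}\oPGeq\sem{\hat d}$. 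Theorem~\ref{thm:acyclic-nf-unique} then yields $\hat c\eqPG\hat d$, and chaining $c\eqPG\hat c\eqPG\hat d\eqPG d$ completes the proof.

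The substance therefore lies in the normalisation lemma, which I would prove by a layered sorting argument in the spirit of computing a conjunctive normal form, exactly as the paper frames the normal form. The target order (reading from domain to codomain) is a layer of Player~$0$ branchings, then Player~$1$ branchings, then priority edges, then merges. I would first use the lattice distributivity axioms B20--B23 of Fig.~\ref{fig:axioms-distributive-lattices} to distribute the Player~$1$ branchings past the Player~$0$ branchings, so that all $\splitzero,\losezero$ precede all $\splitone,\loseone$; this is the diagrammatic analogue of distributing conjunctions over disjunctions to reach CNF. These distributivity steps, together with the copy/merge interactions, produce intermediate merge generators, which I would then push all the way to the right past every branching and priority using the bialgebra-style laws B10--B19 (Fig.~\ref{fig:axioms-distributive-lattices}) and the priority axioms C1, C2 (Fig.~\ref{fig:axioms-priorities}); note that these laws always keep the comonoid (branching) structure to the left and the monoid (merge) structure to the right. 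Finally, I would gather the priorities into a single layer sitting just before the merges by pushing them rightward through the branching layers via the distributive-law axioms D1--D4 (Fig.~\ref{fig:axioms-priorities-lattices}), collapsing any stacked priority edges to their maximum with C3 and deleting priority-$0$ edges with C4, and tidy each individual layer into the required shape using the associativity, commutativity and (co)unit laws B1--B9 together with the structural laws of Fig.~\ref{fig:smc-axioms}.

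The main obstacle is \textbf{termination} of this normalisation, rather than the availability of the right rewrite rules. The separation of a single comonoid--monoid pair by the bialgebra laws is standard and terminating, but here there are two interacting comonoid structures ($\splitzero$ and $\splitone$), and both the distributivity axioms B20--B23 and the bialgebra axioms B10, B18 \emph{duplicate} material, so a naive orientation of the rules need not obviously terminate, and the interaction with priorities (which duplicate again under D1--D4) compounds the difficulty. I would control this by following the Boolean-CNF strategy of the related graphical completeness results~\cite{piedeleu2023finite,gu2023complete}: first fully distribute to fix the nesting of $\splitone$ under $\splitzero$, then separate and pool the merges and priorities, arguing termination along a well-founded lexicographic measure (for instance, the nesting depth of Player~$1$ branchings lying under Player~$0$ branchings, ordered before overall diagram size). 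Once the procedure is shown to terminate, the resulting diagram matches Definition~\ref{def:acyclic-normal-form} up to the structural laws of Fig.~\ref{fig:smc-axioms}, which establishes the normalisation lemma and hence the theorem.
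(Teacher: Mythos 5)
Your top-level argument is exactly the paper's: normalise both diagrams, invoke the uniqueness of normal forms (Theorem~\ref{thm:acyclic-nf-unique}) together with soundness, and conclude by transitivity of $\eqPG$. Where you genuinely diverge is in how you propose to prove the normalisation lemma. You set up a global rewriting system (distribute Player~$1$ branchings under Player~$0$ branchings via B20--B23, push merges right via B10--B19 and C1--C2, pool priorities via D1--D4 and C3--C4) and then confront termination head-on, proposing a lexicographic measure on nesting depth and size. The paper's Lemma~\ref{lem:acyclic-normalisation} instead proceeds by \emph{structural induction on the diagram}: it peels off one generator $g$ at a time, assumes the remainder $d$ is already in normal form, and shows by case analysis on $g$ how to absorb it into $d$'s layers (using Lemma~\ref{lem:co-copy-del} to co-copy and co-delete normal forms where duplication is needed). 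The payoff of the paper's route is that termination is free --- the induction is on the syntactic construction of the diagram, so the duplication performed by B20--B23, B10, B18 and D1--D4 never threatens well-foundedness. Your route is viable in principle and is the standard one for Boolean CNF conversion, but the termination argument is precisely the part you leave as a sketch, and it is the hardest part: with two interacting comonoid structures plus priorities, you would need to verify that every duplicating rule strictly decreases your lexicographic measure, including the merge- and priority-propagation steps applied to the material that distribution has just copied. I would not call this a fatal gap, but it is the one place where your proof is substantially less finished than the paper's, and if you pursue your version you should either carry out that verification in full or restructure the normalisation as a structural induction, which makes the issue disappear.
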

\begin{proof}
We proceed using a standard normalisation argument: given an acyclic diagram $d$, we give a procedure to rewrite it to a diagram $d'$ in normal form, using only the axioms of PGA, so that $d=d'$. Since normal forms uniquely represent each class of diagrams up to semantic equivalence (Theorem~\ref{thm:acyclic-nf-unique}), any two acyclic diagrams that denote the same open parity game will thus be equal to the same normal form, and therefore equal to each other, by transitivity of equality.

The main idea of the normalisation argument is to use the distributivity of $\priorityedge{p}$ over $\splitzero$, $\losezero$, and $\splitone$, $\loseone$ in order to push all of the priority nodes to the right of these four generators, combining them using the maximum axiom when necessary. Then we can use the distributive lattice axioms to reorganise $\splitzero$, $\losezero$, and $\splitone$, $\loseone$ in the order needed. We obtain some extra $\join$, $\start$ that we can push past the $\priorityedge{p}$ using the distributivity of the latter over the former, until we are done. While this is an easy proof by structural induction, which is very close to the proof that any Boolean formula is equivalent to one in CNF, we could not find a reference, so we include it in Appendix, \emph{cf.} Lemma~\ref{lem:acyclic-normalisation}.
\end{proof}

\subsubsection{General case}
\label{sec:general-completeness}
For games that may contain cycles (and therefore infinite plays) the key idea is that all such cycles can be removed. Indeed, intuitively, this is the case in the semantics: if a cycle is a winning play for one of the players, we should be able to rewrite it to a $\loseone$ or $\losezero$. This is precisely what the parity axiom allows us to do.
\begin{theorem}[Completeness of PGA]\label{thm:completeness}
For any two diagrams $c,d\from v\to w$, if $\sem{c} \oPGeq \sem{d}$ then $c\eqPG d$.
\end{theorem}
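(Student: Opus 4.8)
The plan is to reduce the general case to the acyclic case already established in Theorem~\ref{thm:acyclic-completeness}. Since any diagram $c\from v\to w$ may have both incoming and outgoing wires in its domain and codomain, I would first straighten the type using the compact-closed axioms of Fig.~\ref{fig:compact-closed-axioms}: by bending all $\objl$-wires in the domain to the codomain and all $\objl$-wires in the codomain to the domain (axioms \labeleq{A1} and \labeleq{A2}), every diagram is provably equal to one of the form $\objr^m\to\objr^n$ for suitable $m,n$. Because $\sem{\cdot}$ is a symmetric monoidal functor and these axioms are sound (Theorem~\ref{thm:soundness}), this reshaping preserves both provable equality and semantics, so it suffices to prove completeness for diagrams of type $\objr^m\to\objr^n$.

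Next I would put an arbitrary diagram $d\from\objr^m\to\objr^n$ into a \emph{loop-reduced} form and argue that every loop can be eliminated. The central tool is the parity axiom \labeleq{E} of Fig.~\ref{fig:parity-axiom}, which rewrites a single feedback loop carrying priority $p$ to a loop-free diagram (a $\splitzero$ if $p$ is even, a $\start\otimes\id$ if $p$ is odd). The intended strategy is: whenever $d$ contains a cycle, I would use the compact-closed structure together with the acyclic normalisation of Theorem~\ref{thm:acyclic-completeness} to isolate a single elementary loop whose highest priority is exposed on the loop wire; concretely, I would route all the internal edges through caps and cups so that the cyclic part is presented as the trace of an acyclic subdiagram, normalise that acyclic part so that a single $\priorityedge{p}$ sits on the loop (using the maximum axiom \labeleq{C3} to combine priorities along the loop and the distributive-lattice and priority-over-lattice axioms of Figs.~\ref{fig:axioms-distributive-lattices}--\ref{fig:axioms-priorities-lattices} to move $\splitzero,\splitone$ past the loop wire), and then apply \labeleq{E} to remove it. Each such step strictly decreases the number of loops, so by induction the diagram is provably equal to an acyclic one. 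Soundness (Theorem~\ref{thm:soundness}) guarantees the rewritten acyclic diagram has the same semantics as $d$.

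Finally, once both $c$ and $d$ have been rewritten to acyclic diagrams $c'\eqPG c$ and $d'\eqPG d$ with $\sem{c'}\oPGeq\sem{c}\oPGeq\sem{d}\oPGeq\sem{d'}$, I invoke Theorem~\ref{thm:acyclic-completeness} to conclude $c'\eqPG d'$, and hence $c\eqPG d$ by transitivity and the fact that $\eqPG$ is a congruence.

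I expect the main obstacle to be the loop-elimination step, and specifically the claim that any cycle can always be presented in the exact shape matching the left-hand side of the parity axiom \labeleq{E}. The parity axiom has a very rigid form (one loop, a single exposed priority, a fixed arrangement of $\splitzero/\splitone$ around it), whereas a general cycle may interleave several priorities, branch through nodes owned by both players, and share wires with other loops. The delicate part is therefore to show that, using only the acyclic axioms applied to the ``unrolled'' body of the loop, one can always massage an arbitrary feedback loop into this canonical shape before applying \labeleq{E}. In particular I would need the maximum axiom \labeleq{C3} to collapse the priorities encountered around the loop into the single relevant value $\limsup$ governing the parity condition, and the distributivity axioms to push every lattice generator out of the loop; verifying that this is always possible, independent of how the loop is entangled with the rest of the diagram, is the technical heart of the argument. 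A clean way to organise this would be to induct on the number of loops while maintaining the invariant that the acyclic normal form of the loop body exposes exactly one priority on the feedback wire, so that \labeleq{E} applies verbatim at each stage.
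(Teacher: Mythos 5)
Your overall strategy is the same as the paper's: use the compact-closed axioms to reduce to diagrams of type $\objr^m\to\objr^n$ (this is the paper's Lemma~\ref{lem:compact}), eliminate feedback loops via the parity axiom, and finish with acyclic completeness (Theorem~\ref{thm:acyclic-completeness}). You also correctly identify the crux: forcing an arbitrary cycle into the rigid shape of the left-hand side of the parity axiom.

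Where your argument has a genuine gap is in the termination measure for loop elimination. You assert that ``each such step strictly decreases the number of loops,'' i.e.\ that one round of normalise-then-apply-the-parity-axiom disposes of a whole feedback loop, and you propose to maintain the invariant that the normal form of the loop body ``exposes exactly one priority on the feedback wire.'' That invariant is not achievable in general: in the normal form of the acyclic body $b$ of a trace $\Tr^1_{m,n}(b)$, the fed-back output is typically fed by a tree of $\join$ generators collecting wires from \emph{many} priority generators, one for each combination of the players' choices, so the graph-theoretic cycle passes through the feedback wire along several distinct paths with genuinely different priorities. The maximum axiom (C3) cannot collapse these into a single priority, because which path is taken depends on the strategies, and the parity axiom resolves only \emph{one} Player-$0$/Player-$1$ branching around the loop at a time; after one application the diagram is still a trace of an acyclic body, merely a smaller one. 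The paper repairs exactly this point with a nested induction: the outer step removes one trace at a time, while an \emph{inner} induction on the number of generators in the acyclic body $b$ shows that $\Tr^1_{m,n}(b)$ is provably acyclic --- each application of the parity axiom strictly decreases that generator count, and the base case (a body made only of identities and symmetries) is discharged by the compact-closed axioms of Fig.~\ref{fig:compact-closed-axioms}. Substituting this induction measure for yours makes your argument go through.
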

\begin{proof}
First, without loss of generality, we can restrict ourselves to diagrams $\objr^m\to\objr^n$, since the category is compact closed, \emph{cf.} Lemma~\ref{lem:compact} in Appendix.

Now assume that we have two diagrams $c,d\from \objr^m\to\objr^n$ such that $\sem{c} \oPGeq \sem{d}$. The key idea is that if we can show that $c$ and $d$ are equal to \emph{acyclic} diagrams, the theorem follows from acyclic completeness (Theorem~\ref{thm:acyclic-completeness}). 

We are thus going to show that any diagram $ \objr^m\to\objr^n$ is equal to an acyclic one. Reasoning by induction, it is enough to show that we can remove a single loop, \emph{i.e.}, that given an acyclic diagram $b\from \objr^{1+m}\to\objr^{1+n}$, the diagram below is provably equal to an acyclic diagram:
\begin{align}
\label{eq:trace-b}
\Tr^1_{m,n}(b)\; := \; 
\InputIfFileExists{trace-b-m-n.tikz}{}{\input{./tikz/trace-b-m-n.tikz}}

\end{align}
We now reason by induction on the number of generators in $b$. For the base case, if $b$ is composed of zero generators, it consists exclusively of identities and symmetries, and thus $\Tr^1_{m,n}(b)$ can be proved equal to an acyclic diagram using only the axioms of compact closed categories (Fig.~\ref{fig:compact-closed-axioms}). This is a standard fact about the free compact closed category on some generating objects, whose proof can be found in~\cite{abramsky2005abstract}. For the inductive case, assume that, for any acyclic diagram $b$ with up to $k$ generators, $\Tr^1_{m,n}(b)$ is equal to an acyclic diagram. We now want to show that this holds for $b$ with $k+1$ generators.

Since $b$ is acyclic, it is provably equal to one in normal form, by Theorem~\ref{thm:acyclic-completeness}. This means that there exists a diagram $b'$ and some priority $p$ such that
\begin{align*}

\InputIfFileExists{trace-b-m-n.tikz}{}{\input{./tikz/trace-b-m-n.tikz}}
\; = 
\InputIfFileExists{trace-b-m-n-1.tikz}{}{\input{./tikz/trace-b-m-n-1.tikz}}

\end{align*}
(where we can use the co-unitality of $\splitone,\loseone$ or $\splitzero,\losezero$ to get a diagram in this form). 
Then, using the axioms of compact closed categories (Fig.~\ref{fig:compact-closed-axioms}), we get
\begin{equation*}

\InputIfFileExists{trace-b-m-n-1.tikz}{}{\input{./tikz/trace-b-m-n-1.tikz}}
 \labeleq{A1} 
\InputIfFileExists{trace-b-m-n-2.tikz}{}{\input{./tikz/trace-b-m-n-2.tikz}}

\end{equation*}
We can now apply the parity axiom (Fig.~\ref{fig:parity-axiom}), for which there are two cases: 
\begin{itemize}
\item if $p$ is even, then
\end{itemize} 
\begin{align*}

\InputIfFileExists{trace-b-m-n.tikz}{}{\input{./tikz/trace-b-m-n.tikz}}
  \labeleq{E} 
\InputIfFileExists{trace-b-m-n-k-odd.tikz}{}{\input{./tikz/trace-b-m-n-k-odd.tikz}}
\labeleq{A1} 
\InputIfFileExists{trace-b-m-n-k-odd-1.tikz}{}{\input{./tikz/trace-b-m-n-k-odd-1.tikz}}

\end{align*}
\begin{itemize}
\item if $p$ is odd, then
\end{itemize}
\begin{align*}

\InputIfFileExists{trace-b-m-n.tikz}{}{\input{./tikz/trace-b-m-n.tikz}}
 \labeleq{E} 
\InputIfFileExists{trace-b-m-n-k-even.tikz}{}{\input{./tikz/trace-b-m-n-k-even.tikz}}
 \labeleq{A1} 
\InputIfFileExists{trace-b-m-n-k-even-1.tikz}{}{\input{./tikz/trace-b-m-n-k-even-1.tikz}}

\end{align*}
Note that we have strictly decreased the number of generators in $b$ in both cases, so we can apply the induction hypothesis to rewrite the resulting diagram into an acyclic one.
\end{proof}

As we claimed in the introduction, the completeness of PGA allows us to solve parity games using equational reasoning. First, notice that any parity game $(V,V_0,E,\Labelling)$ can be seen as an \emph{open} parity game $\oPG{A}\from |V|\to 0$ with all positions as entry positions (and no exits); using the universality of our syntax, we can encode $\oPG{A}$ as a diagram $d_{\oPG{A}}\from |V|\to 0$. Second, any diagram in normal form with no outputs is equal to a monoidal product of $\loseone$ and $\losezero$, indicating which entry positions are winning for each player. Hence, normalising $d_{\oPG{A}}$ solves the parity game with which we started. We refer the reader to Fig.~\ref{fig:example-solution-1}-\ref{fig:example-solution-2} for a worked example.

\section{An alternative symbolic syntax}\label{sec:symbolic}

In this section, we present a syntax which is equivalent to the diagrammatic language of this paper. This kind of correspondence of calculi with fixpoint with string diagrams for traced monoidal categories is now standard and was first proposed by Hasegawa~\cite{hasegawa2012models}. We sketch it here to show that it is also a viable approach to an algebraic treatment of (open) parity games---one which might be more immediately amenable to computer implementation, even if the connection to parity games is less intuitive than the graphical presentation.

\paragraph{Syntax} We define \emph{game-expressions} inductively with the following grammar:
\[
t ::= x\mid \priority{k}{t} \mid \choosezero{t}{t}\mid \chooseone{t}{t} \mid \bot\mid \top \mid \fix{x}{t}
\]
where $x$ ranges over some countably infinite set of variables and $k\in\Priorities$. We write $n\vdash t$ for a game expression with $n\in\N$ free variables $x_1,\dots, x_n$. We keep track of the ordering of the free variables to facilitate the diagrammatic interpretation of expressions below, where they become output wires (and are therefore ordered). The typing rules for the judgment $n\vdash t$ are the usual ones for algebraic theories, with the addition of the following fixpoint rule: from $n+1\vdash t$ we can derive $n\vdash \fix{x_{n+1}}{t}$. Of course, we consider expressions up to alpha-equivalence of bound variables. 

A game expression is intended to encode an open parity game \emph{with a single entry position}. Intuitively, free variables correspond to exit positions, $\choosezero{s}{t}$ corresponds to a Player $0$ choice between positions $s$ and $t$, $\chooseone{s}{t}$ to a Player $1$ choice between positions $s$ and $t$, $\bot$ to a loss for Player $0$, $\top$ to a win for Player $0$, $\priority{k}{t}$ to a node with in-and out-degree one, and priority $k\in\Priorities$, and $\fix{x}{t}$ to a loop joining the exit position $x$ to the start position of $t$.
\begin{example}\label{ex:opg-expression}
The open parity game 
$$

\InputIfFileExists{example-opg-1-entry1.tikz}{}{\input{./tikz/example-opg-1-entry1.tikz}}

$$
corresponds to the expression with one free variable
$$
\mu y_4.\mu y_1.\priority{4}{\priority{1}{y_1\land x_1}}\lor \mu y_3.\priority{3}{\priority{2}{y_4\land \priority{1}{y_1\land x_1} \land y_3}}
$$
\end{example}

\paragraph{Semantics} We interpret game-expressions inductively, as diagrams. As mentioned above, a similar interpretation of calculi with fixpoint into traced monoidal categories can be found in the original work of Hasegawa~\cite{hasegawa2012models}. The main idea is that a game-expression $t$ with $n$ free variables is translated into a diagram $\semexpr{n\vdash t}\from \objr\to\objr^n$.
\begin{align*}
\semexpr{n\vdash x_k} &= 
\InputIfFileExists{var-translation.tikz}{}{\input{./tikz/var-translation.tikz}}
 \quad (1\leq k\leq n)
\\
\semexpr{n\vdash\choosezero{s}{t}} &= 
\InputIfFileExists{0-cxd-merge.tikz}{}{\input{./tikz/0-cxd-merge.tikz}}

\\
\semexpr{n\vdash\chooseone{s}{t}} &= 
\InputIfFileExists{1-cxd-merge.tikz}{}{\input{./tikz/1-cxd-merge.tikz}}

\\
\semexpr{n\vdash\priority{k}{t}} &= 
\InputIfFileExists{k-t.tikz}{}{\input{./tikz/k-t.tikz}}

\\
\semexpr{n\vdash 0} &= 
\InputIfFileExists{lose0-startn.tikz}{}{\input{./tikz/lose0-startn.tikz}}

\\
\semexpr{n\vdash 1} &= 
\InputIfFileExists{lose1-startn.tikz}{}{\input{./tikz/lose1-startn.tikz}}

\\
\semexpr{n+1\vdash\fix{x}{t}} &=
\InputIfFileExists{mu-translation.tikz}{}{\input{./tikz/mu-translation.tikz}}

\end{align*}
To obtain an interpretation in terms of open parity games, we simply apply  $\sem{\cdot}$ to the resulting diagrams (or, by completeness of PGA, consider two expressions semantically equal when their diagrammatic translations are equal up to $\eqPG\;$).

\paragraph{Axioms} 
With the interpretation above, up to equality of diagrams, it is easy to see that $\choosezero{}{}, \chooseone{}{}, \bot, \top$ form a distributive lattice (\textsf{B}), that $\priority{k}{\cdot}$ distributes over the lattice operations (\textsf{D}), satisfies $\priority{k'}{\priority{k}{x}} \labeleq{C3} \priority{\max{(k,k')}}{x}$, $\priority{0}{x} \labeleq{C4} x$,
and the following parity condition:
\begin{equation*}\label{eq:parity-axiom}
\fix{y}{\big(\choosezero{x_0}{(\chooseone{x_1}{\priority{k}{y}})}\big)} \labeleq{E} \begin{cases}
			\choosezero{x_0}{x_1} & \text{if $k$ is even};\\
			x_0 & \text{if $k$ is odd}.
		 \end{cases}
\end{equation*}
Since a lot of the symbolic axioms have a diagrammatic counterpart, we keep the axioms labels consistent with those of Fig.~\ref{fig:axioms-distributive-lattices}-\ref{fig:parity-axiom} where possible.

We also need axioms that allow us to swap fixpoints, $\fix{x}{\fix{y}{t}} \labeleq{F1} \fix{y}{\fix{x}{t}}$, and to remove a fixpoint $\fix{x}{t}$ when $x$ does not occur free in $t$: in this case $\fix{x}{t} \labeleq{F2} t$. Interestingly, in the presence of the other axioms, the usual fixpoint equation $\fix{x}{t} = t[\fix{x}{t}/x]$, where $t[u/x]$ denotes the substitution of $u$ for all occurrences of the variable $x$ in $t$, is derivable. This is because, as in the diagrammatic syntax, fixpoints can be removed, and any expression is equal to one in CNF, or rather, a suitable generalisation thereof that includes priorities (the symbolic counterpart of Definition~\ref{def:acyclic-normal-form}). 

Finally, using the results of this paper, or symbolic translations of the same proofs, it would not be difficult to show that these axioms are sound and complete for the intended interpretation. As before, we can use the equational theory to solve parity games. Rather than  a tedious reproduction of the diagrammatic proofs, we illustrate the idea with the following symbolic derivation. 
\begin{example} 
Coming back to Example~\ref{ex:opg-expression}, we can close the only exit position of the game by setting the free variable $x_1$ to $\top$ in the corresponding game-expression. We obtain,
\begin{align*}
& \mu y_4.\mu y_1.\priority{4}{\priority{1}{y_1\land \top}}\lor \mu y_3.\priority{3}{\priority{2}{y_4\land \priority{1}{y_1\land \top} \land y_3}}
\\
& \labeleq{B} \mu y_4.\mu y_1.\priority{4}{\priority{1}{y_1}}\lor \mu y_3.\priority{3}{\priority{2}{y_4\land \priority{1}{y_1} \land y_3}}
\\
& \labeleq{C3} \mu y_4.\mu y_1.\priority{4}{\priority{1}{y_1}}\lor \mu y_3.\priority{3}{y_4\land \priority{1}{y_1}\land y_3}
\\
& \labeleq{C3} \mu y_4.\mu y_1.\priority{4}{y_1}\lor \mu y_3.\priority{3}{y_4\land \priority{1}{y_1}\land y_3}
\\
& \labeleq{B} \mu y_4.\mu y_1.\priority{4}{y_1}\lor \mu y_3.\bot\lor \priority{3}{y_4\land \priority{1}{y_1}\land y_3}
\\
& \labeleq{D} \mu y_4.\mu y_1.\priority{4}{y_1}\lor \mu y_3.\bot\lor \big(\priority{3}{y_4\land \priority{1}{y_1}}\land \priority{3}{y_3}\big)
\\
& \labeleq{E} \mu y_4.\mu y_1.\priority{4}{y_1}\lor \bot \labeleq{B} \mu y_4.\mu y_1.\bot\lor (\top \land \priority{4}{y_1})
\\
& \labeleq{E} \mu y_4.\bot\lor \top \labeleq{B} \mu y_4.\top \labeleq{F2} \top
\end{align*}

We have thereby proved again that Player $0$ wins from the top right position of the corresponding parity game, as the last diagram in Fig.~\ref{fig:example-solution-2} had already shown.
\end{example}

\section{Discussion and Further Work}
\label{sec:conclusion}

We have presented a sound and complete calculus in which parity games can be encoded and solved using equational reasoning---this opens up promising avenues of research. 

From the complexity-theoretic perspective, our normalisation procedure requires space exponential in the size of the original diagram/game (in the same way that obtaining a CNF formula equivalent to a given Boolean formula may require exponential space). However, this is not necessarily the case on restricted classes of diagrams. It would be useful to identify sub-categories of diagrams for which solutions are tractable. Perhaps existing work on compositional tractability of parity games for some graph operations~\cite{dittmann2016graph} can point us in the right direction. 
Moreover, the local flavour of reasoning using equational axioms appears most closely related to strategy improvement algorithms for solving parity games~\cite{voge2000discrete}. Making the potential connection precise is an interesting direction of research.


Finally, it seems possible to extend the present work to alternating parity \emph{automata}, by incorporating the additional structure found in the work of Piedeleu and Zanasi on automata~\cite{piedeleu2023finite} and adding observable actions/transitions to open parity games. Indeed, their work shares a lot of the algebraic structure found in open parity games: both diagrammatic calculi are compact closed, with a distributive lattice on each object, so combining them certainly appears possible. Semantically, in order to bridge the two pieces of work, we would require the denotation of parity automata to keep track of the actions performed by the automaton for each possible winning run (for Player $0$) starting from each entry position. Excitingly, this would also open up the way to a diagrammatic treatment of the modal $\mu$-calculus.

\begin{IEEEkeywords}
parity games, string diagrams, axiomatisation.
\end{IEEEkeywords}

\bibliographystyle{alpha}
\bibliography{refs}

\newpage

\appendix

The first result about acyclic diagrams is that we can co-copy and co-delete them at will---this gives us a diagrammatic analogue of substitution in a traditional symbolic syntax.
\begin{lemma}
For any acyclic diagram $d\from \objr^m\to\objr^n$ the following equalities are derivable:
\label{lem:co-copy-del}
$$

\InputIfFileExists{dxd-merge.tikz}{}{\input{./tikz/dxd-merge.tikz}}
 \;\eqPG\; 
\InputIfFileExists{merge-d.tikz}{}{\input{./tikz/merge-d.tikz}}

$$
$$

\InputIfFileExists{start-d.tikz}{}{\input{./tikz/start-d.tikz}}
 \;\eqPG\; \start^{\!\!\! n}
$$
\end{lemma}
\begin{proof}
By structural induction. The base cases are all axioms in Fig.~\ref{fig:axioms-distributive-lattices} and Fig.~\ref{fig:axioms-priorities}. The inductive cases are as follows, where we omit object labels for readability:
\begin{itemize}
\item Composition: 
for $c$ and $d$ satisfying the induction hypothesis, we have
\begin{align*}

\InputIfFileExists{cdxcd-merge.tikz}{}{\input{./tikz/cdxcd-merge.tikz}}
 &= 
\InputIfFileExists{cxc-merge-d.tikz}{}{\input{./tikz/cxc-merge-d.tikz}}
\\
& = 
\InputIfFileExists{merge-cd.tikz}{}{\input{./tikz/merge-cd.tikz}}

\end{align*}
and
\begin{align*}

\InputIfFileExists{start-cd.tikz}{}{\input{./tikz/start-cd.tikz}}
 = 
\InputIfFileExists{start-d-unlabelled.tikz}{}{\input{./tikz/start-d-unlabelled.tikz}}
 = \start
\end{align*}
\item Monoidal product: 
for $c_1$ and $c_2$ satisfying the induction hypothesis, we have
\begin{align*}

\InputIfFileExists{cxdxcxd-merge.tikz}{}{\input{./tikz/cxdxcxd-merge.tikz}}
 & = 
\InputIfFileExists{dxd-mergexmerge-cxid.tikz}{}{\input{./tikz/dxd-mergexmerge-cxid.tikz}}
\\
& = 
\InputIfFileExists{mergexmerge-cxd.tikz}{}{\input{./tikz/mergexmerge-cxd.tikz}}

\end{align*}
and
\begin{align*}

\InputIfFileExists{startxstart-cxd.tikz}{}{\input{./tikz/startxstart-cxd.tikz}}
 & = 
\InputIfFileExists{startxstart-idxd.tikz}{}{\input{./tikz/startxstart-idxd.tikz}}
\\
& = 
\InputIfFileExists{startxstart.tikz}{}{\input{./tikz/startxstart.tikz}}

\end{align*}
\end{itemize}
\end{proof}
The proof of the following lemma describes the normalisation procedure for acyclic diagrams.
\begin{lemma}
\label{lem:acyclic-normalisation}
Every acyclic diagram $d\from \objr^m\to\objr^n$ is equal to one in normal form.
\end{lemma}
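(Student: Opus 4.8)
The plan is to prove the lemma by structural induction on the acyclic diagram $d$, applying an explicit normalisation procedure at each composition step; this is the diagrammatic analogue of converting a Boolean formula to conjunctive normal form, with an additional layer of priority generators to manage. The base cases are the single generators of~\eqref{eq:syntax-acyclic}: each is already a (degenerate) four-layer diagram once padded with identity wires in the three absent layers. For the vertical (monoidal) product of two normal forms the four layers align by stacking, so the product is immediately in normal form; all the real work is therefore concentrated in the case of horizontal (sequential) composition of two normal-form diagrams, where the $\join,\start$ layer of the first meets the $\splitzero,\losezero$ layer of the second and the global layering must be restored.

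To renormalise such a composite I would proceed in three phases. First, I would commute the $\join,\start$ layer of the first diagram rightwards through the $\splitzero,\losezero,\splitone,\loseone$ and $\priorityedge{p}$ layers of the second, using the bialgebra-style interaction axioms of Fig.~\ref{fig:axioms-distributive-lattices} (such as B10, B11, B18, B19, by which a merge followed by a split becomes a pair of splits followed by a pair of merges) together with the priority axioms C1, C2 (by which merges and starts pass through priority nodes). Each such step duplicates an acyclic sub-diagram, which is legitimate by the co-copy/co-delete Lemma~\ref{lem:co-copy-del}. Second, with all merges and starts pushed to the right, I would reorder the now-contiguous block of lattice generators so that all Player $0$ generators precede all Player $1$ generators, using the distributive-lattice axioms B20--B23 (the diagrammatic form of $a\land(b\lor c)=(a\land b)\lor(a\land c)$) together with the co-associativity and co-commutativity axioms to tidy each layer. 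Third, I would collect the priority generators into a single layer sitting just before the final merge layer, fusing priorities composed in series by the maximum axiom C3 and sliding priorities past the lattice generators with the distributivity axioms D1--D4, thereby leaving the required ordering $[\splitzero,\losezero]\,[\splitone,\loseone]\,[\priorityedge{p}]\,[\join,\start]$.

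The step I expect to be the main obstacle is proving that this rewriting \emph{terminates}, exactly as for Boolean CNF, where unrestricted distribution can loop or cause super-polynomial blow-up. I would address this with a well-founded termination measure --- for instance a lexicographic combination of the number of \emph{inversions} in the lattice block (pairs consisting of a Player $1$ generator lying to the left of a Player $0$ generator along some wire) and the number of merge/start generators lying to the left of the priority block --- and verify that every application of a distributivity or bialgebra axiom strictly decreases it while the priority-fusion phase does not increase it. The remaining delicate bookkeeping is checking that the three phases do not interfere: collecting priorities must not reintroduce inversions, and reordering the lattice layer must not strand merges on the wrong side of the priority block. Both follow from the fact, encoded in axioms C1, C2 and D1--D4, that priorities distribute over every other generator, so a priority node can always be slid back into position after each reordering step.
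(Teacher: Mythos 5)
Your proposal is correct in spirit but organises the induction differently from the paper, and the difference matters for where the proof burden lands. The paper does not take sequential composition of two normal forms as its inductive step: it peels a \emph{single} generator $g$ off the left of a diagram whose remainder $d$ is already in normal form, and shows case by case how to absorb $g$ into $d$ (pushing a $\priorityedge{k}$ through the lattice layers with D1, D3, C3; handling $\splitone$ via B8, B22, B23; absorbing $\join$ and $\start$ wholesale via Lemma~\ref{lem:co-copy-del}). Because each absorption recurses only into a strictly smaller normal form, termination is immediate from the well-founded structural recursion, and no rewriting measure is needed. Your coarser decomposition --- merge two complete normal forms --- is closer to the textbook CNF argument and is workable, but it concentrates all the difficulty in exactly the place you identify: termination of the distributivity phase. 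Note that your candidate measure is fragile as stated. Applying B22 to eliminate one inversion duplicates a $\splitone$, and each copy may sit to the left of every $\splitzero$ downstream of the original, so the inversion count need not strictly decrease; you would need a multiset or path-counting refinement (or an innermost-first strategy) to make this rigorous. There is also a small ordering wrinkle: after pushing the first diagram's $\join,\start$ layer to the right, its priority layer still separates the two lattice blocks, so priority-sliding (your phase three) must in fact be interleaved with, or precede, the lattice reordering --- you flag this interference but the phases as listed do not quite execute in that order. None of this is a wrong idea, but the paper's one-generator-at-a-time induction buys you termination for free, which is precisely what your version has to pay for explicitly.
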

\begin{proof}
We are going to give a procedure to rewrite any acyclic diagram in normal form. 
More specifically, we consider
\begin{equation}\label{eq:g-nf}

\InputIfFileExists{g-nf.tikz}{}{\input{./tikz/g-nf.tikz}}

\end{equation}
for $g\in\{\splitone,\loseone,\splitzero,\losezero,\priorityedge{k},\join,\start\}$ and $d\from\objr^{m-p}\to\objr^n$ some acyclic diagram already in normal form; we will demonstrate how to rewrite~\eqref{eq:g-nf} into a diagram in normal form, using only the axioms of PGA.
\begin{enumerate}
\item For $\priorityedge{k}$, there is some $d'$ such that
\begin{align*}

\InputIfFileExists{k-nf.tikz}{}{\input{./tikz/k-nf.tikz}}
 & = 
\InputIfFileExists{k-nf-1.tikz}{}{\input{./tikz/k-nf-1.tikz}}

\\
& \labeleq{D1}\; 
\InputIfFileExists{k-nf-2.tikz}{}{\input{./tikz/k-nf-2.tikz}}

\\
& \labeleq{D3}\; 
\InputIfFileExists{k-nf-3.tikz}{}{\input{./tikz/k-nf-3.tikz}}

\\
& \labeleq{C3}\; 
\InputIfFileExists{k-nf-4.tikz}{}{\input{./tikz/k-nf-4.tikz}}

\end{align*}
Then, we can apply a similar strategy to absorb each of the two $\priorityedge{k}$ into the diagram in the dashed box  and obtain a diagram in normal form.
\item For $\splitone$, since $d$ is in normal form, there are two cases:

\emph{i)} there is some $d'$ such that
\begin{align*}

\InputIfFileExists{1-nf.tikz}{}{\input{./tikz/1-nf.tikz}}
 & = 
\InputIfFileExists{1-nf-1.tikz}{}{\input{./tikz/1-nf-1.tikz}}

\\
& \labeleq{B8;B22}\; 
\InputIfFileExists{1-nf-3.tikz}{}{\input{./tikz/1-nf-3.tikz}}

\end{align*}
We can use Lemma~\ref{lem:co-copy-del} to rewrite the diagram in the dashed box into normal form. Then, we can apply the same rewriting strategy as above to absorb each of the two $\splitone$ into the resulting diagram and obtain a diagram in normal form.

\emph{ii)} there is some $d'$ such that
\begin{align*}

\InputIfFileExists{1-nf.tikz}{}{\input{./tikz/1-nf.tikz}}
 & = 
\InputIfFileExists{1-nf-4.tikz}{}{\input{./tikz/1-nf-4.tikz}}

\\
& \labeleq{B23}\; 
\InputIfFileExists{1-nf-5.tikz}{}{\input{./tikz/1-nf-5.tikz}}

\end{align*}
and we can conclude using Lemma~\ref{lem:co-copy-del} to rewrite the diagram in the dashed box into normal form.
\item For $\splitzero$, $\loseone$, and $\losezero$, the resulting diagrams are already in normal form:
$$

\InputIfFileExists{0-nf.tikz}{}{\input{./tikz/0-nf.tikz}}

$$ 
$$

\InputIfFileExists{lose0-nf.tikz}{}{\input{./tikz/lose0-nf.tikz}}
$$
$$ 

\InputIfFileExists{lose1-nf.tikz}{}{\input{./tikz/lose1-nf.tikz}}

$$
\item The cases of $\join$ and $\start$ can be dealt with using Lemma~\ref{lem:co-copy-del}.

\emph{i}) For $\join$. Since $d$ is in normal forl, we can find $d_1$ and $d_2$ such that
\begin{align*}

\InputIfFileExists{merge-nf.tikz}{}{\input{./tikz/merge-nf.tikz}}
\;&:=\;
\InputIfFileExists{merge-nf-1.tikz}{}{\input{./tikz/merge-nf-1.tikz}}

\\
& \labeleq{Lemma~\ref{lem:co-copy-del}}\quad 
\InputIfFileExists{merge-nf-2.tikz}{}{\input{./tikz/merge-nf-2.tikz}}

\end{align*}
with the last diagram in normal form.

\emph{ii}) For $\start$, similarly we have $d_1$ and $d_2$ such that
\begin{align*}

\InputIfFileExists{start-nf.tikz}{}{\input{./tikz/start-nf.tikz}}
\;&:=\;
\InputIfFileExists{start-nf-1.tikz}{}{\input{./tikz/start-nf-1.tikz}}

\\
& \labeleq{Lemma~\ref{lem:co-copy-del}}\quad 
\InputIfFileExists{start-nf-2.tikz}{}{\input{./tikz/start-nf-2.tikz}}

\\
& \labeleq{B3}\quad 
\InputIfFileExists{start-nf-3.tikz}{}{\input{./tikz/start-nf-3.tikz}}

\end{align*}
where $d_2$ is in normal form, by assumption.
\end{enumerate}
\end{proof}

The following lemma implies that, without loss of generality, we need only consider to left-to-right diagrams for our completeness proof.
\begin{lemma}\label{lem:compact}
There are natural bijections between the sets $\Syntax(v_1\!\objl\! v_2,w)$ and $\Syntax(v_1v_2,w\!\objr)$, and between $\Syntax(v,w_1\!\objl\! w_2)$ and $\Syntax(v\!\objr,w_1w_2)$, \emph{i.e.} between sets of string diagrams of the form
$$

\InputIfFileExists{wrong-way-left.tikz}{}{\input{./tikz/wrong-way-left.tikz}}
\quad\text{and}\quad 
\InputIfFileExists{right-way-right.tikz}{}{\input{./tikz/right-way-right.tikz}}

$$
$$
 
\InputIfFileExists{wrong-way-right.tikz}{}{\input{./tikz/wrong-way-right.tikz}}
\quad\text{and}\quad 
\InputIfFileExists{right-way-left.tikz}{}{\input{./tikz/right-way-left.tikz}}
 
$$
where $v,w, v_i, w_i$ are words over $\{\objr,\objl\}$.
\end{lemma}
\begin{proof}
The lemma holds in any compact closed category and relies on the ability to bend wires using $
\InputIfFileExists{cap-down.tikz}{}{\input{./tikz/cap-down.tikz}}
$ and $
\InputIfFileExists{cup-down.tikz}{}{\input{./tikz/cup-down.tikz}}
$. Explicitly, given a diagram of the first form, we can obtain one of the second form as follows:
\begin{equation}

\InputIfFileExists{wrong-way-left.tikz}{}{\input{./tikz/wrong-way-left.tikz}}
\quad \mapsto\quad 
\InputIfFileExists{bent-wires.tikz}{}{\input{./tikz/bent-wires.tikz}}

\end{equation}
The inverse mapping is given by the same wiring with the opposite direction. That they are inverse transformations follows immediately from the defining axioms of compact closed categories (A1-A2). The other bijection is constructed analogously. 
\end{proof}
\noindent Intuitively, Lemma~\ref{lem:compact} tells us that we can always bend incoming wires to the left and outgoing wires to the right before applying some equations, and recover the original orientation of the wires by bending them into their original place later.

\end{document}